\documentclass[10pt,onecolumn,draftcls]{IEEEtran}

\def\comment#1{}
\usepackage{graphicx}

\usepackage[utf8]{inputenc} 
\usepackage[T1]{fontenc}    
\usepackage{url}            
\usepackage{booktabs}       
\usepackage{amsfonts}       
\usepackage{nicefrac}       
\usepackage{microtype}      

\usepackage{subfigure}
\usepackage{booktabs}
\usepackage{amsfonts,bbm,mathrsfs}
\usepackage{amsmath}
\usepackage{algorithm,algorithmic,bm,color}
\usepackage{graphicx,amsfonts,amsmath,amssymb,xcolor,amsthm}
\usepackage{algorithm,algorithmic,bm,color}
\usepackage{multirow}
\usepackage{wrapfig}
\usepackage{adjustbox}
\usepackage{graphics, color,  dsfont}

\usepackage{graphicx,psfrag,dsfont}
\usepackage{amsfonts,amsmath,amssymb,color}
\usepackage{bbm,setspace}

\newcommand{\argmin}{\arg\!\min}

\newcommand{\Eox}[1]{{\rm E}\left[#1\right]}

\newcommand{\LPd}[1]{\langle#1\rangle}

\newcommand{\Cmat}{{\boldsymbol C}}
\newcommand{\Dmat}{{\boldsymbol D}}

\newcommand{\Hmat}[0]{{{\boldsymbol H}}}
\newcommand{\Imat}{{\boldsymbol I}}

\newcommand{\Mmat}[0]{{{\boldsymbol M}}}

\newcommand{\Rmat}[0]{{{\boldsymbol R}}}

\newcommand{\Xmat}{{\boldsymbol X}}
\newcommand{\Ymat}[0]{{{\boldsymbol Y}}}
\newcommand{\Zmat}{{\boldsymbol Z}}

\newcommand{\ev}[0]{{\boldsymbol{e}}}
\newcommand{\fv}{\boldsymbol{f}}

\newcommand{\qv}[0]{{\boldsymbol{q}}}

\newcommand{\uv}[0]{{\boldsymbol{u}}}
\newcommand{\vv}{\boldsymbol{v}}

\newcommand{\xv}{\boldsymbol{x}}
\newcommand{\yv}{\boldsymbol{y}}

\newcommand{\zv}{\boldsymbol{z}}

\newcommand{\vvt}{\tilde{\boldsymbol{v}}}

\newcommand{\ts}{^{\top}}
\newcommand{\inv}{^{-1}}

\newcommand{\ie}{{\em i.e.}}
\newcommand{\eg}{{\em e.g.}}

\newcommand{\Uc}{\mathcal{U}}
\newcommand{\Qc}{\mathcal{Q}}
\newcommand{\Ec}{\mathcal{E}}
\newcommand{\Fc}{\mathcal{F}}
\newcommand{\Nc}{\mathcal{N}}

\newtheorem{theorem}{Theorem}

\newcommand{\mvec}[1]{{\boldsymbol #1}}

%
%
%
%
\begin{document}

\title{GAP-net for Snapshot Compressive Imaging
}

\author{Ziyi Meng, Shirin Jalali  and  Xin Yuan 
\thanks{Ziyi Meng is at Beijing University of Posts and Telecommunications, Beijing, 100876, China, mengziyi@bupt.edu.cn.
	Shirin Jalali   is with Nokia Bell Labs, 600 Mountain Avenue, Murray Hill, NJ, 07974, USA, shirin.jalali@nokia-bell-labs.com. 
	Xin Yuan is with Nokia Bell Labs, 600 Mountain Avenue, Murray Hill, NJ, 07974, USA, xyuan@bell-labs.com (corresponding author).}
}


\maketitle

\begin{abstract}
Snapshot compressive imaging (SCI) systems aim to capture high-dimensional ($\ge3$D) images in a single shot using  2D detectors. SCI devices   include two main parts: a hardware encoder and a software decoder.
The hardware encoder typically consists of an (optical) imaging system designed to capture  {compressed measurements}. The software decoder on the other hand refers to a reconstruction algorithm that retrieves the desired high-dimensional signal from those  measurements.  
In this paper, using deep unfolding ideas, we propose an SCI recovery algorithm, namely GAP-net, which unfolds the  generalized alternating projection (GAP) algorithm.  At each stage, GAP-net passes  its current estimate of the desired signal through  a trained convolutional neural network (CNN). The CNN  operates as a denoiser that projects the estimate back to the desired  signal space. For the GAP-net that employs  trained auto-encoder-based denoisers, we  prove a probabilistic global convergence result. 
Finally, we investigate the performance  of GAP-net in solving video SCI and spectral SCI problems. In both cases, GAP-net demonstrates competitive performance on both synthetic and real data. In addition to having high accuracy and high speed, we show that GAP-net is flexible with respect to signal modulation implying that  a trained GAP-net decoder can be applied in different systems. Our code is at \url{https://github.com/mengziyi64/ADMM-net}.
\begin{IEEEkeywords}
Compressive imaging, Compressive sensing, Deep learning, Generative alternating projection, Snapshot, Convolution neural network, Convergence, Denoising
\end{IEEEkeywords}
\end{abstract}

\section{Introduction}
\label{intro}
Recent advances in artificial intelligence and robotics have resulted in an unprecedented demand for computationally-efficient high-dimensional (HD) data capture and processing devices. However, existing optical sensors usually can only directly capture no more than two-dimensional (2D) signals. Capturing 3D or higher dimensional signals  remains challenging since sensors that directly perform 3D data acquisition do not yet exist. 

In recent years, snapshot compressive imaging (SCI) systems that employ 2D detectors to capture HD ($\ge$3D) signals have been shown to be a promising solution to address this challenge. 
Different from conventional cameras, SCI systems perform {sampling} on a set of consecutive images---video frames (\eg, CACTI~\cite{Patrick13OE}) or spectral channels (\eg, CASSI~\cite{Wagadarikar08CASSI})---in accordance with the sensing matrix and {integrate} these sampled signals along time  or spectrum to obtain the final compressed measurements. Using this technique, SCI systems~\cite{Gehm07,Hitomi11ICCV,Reddy11CVPR} can capture high-speed motion~\cite{Yuan14CVPR,Yuan16BOE} or high-resolution spectral information~\cite{Wagadarikar09CASSI,Yuan15JSTSP}, with low memory, low bandwidth, low power and potentially low cost. 

There are two main ingredients in an SCI system: a hardware encoder and a software decoder.
The hardware encoder is typically  an (optical) imaging system that is designed to capture  {\em compressed measurements} of the desired signal and the software decoder refers to the algorithm that recovers the desired HD data from those measurements. 
The underlying principle of the {\em hardware encoder} is to modulate the HD data using a speed higher than the capture rate of the camera. To achieve this goal, typically,  a coded aperture (binary mask) or some other spatial light modulators is  employed. 
In this paper, we focus on the {\em software decoder}, \ie, the reconstruction algorithm, which, until recently, had precluded  wide application of SCI due to the slow speed of existing reconstruction methods. Thanks to deep learning, various SCI recovery algorithms based on convolutional neural networks (CNNs) have been developed in recent years~\cite{Qiao2020_APLP,Wang19TIP,Meng20ECCV_TSAnet,Cheng20ECCV_Birnat}. These algorithms  significantly  speed up the reconstruction process compared to the previous methods. This has paved the way to daily applications of SCI systems. 

Reconstruction algorithms for SCI can  be divided into four classes as follows: (Refer to  Fig.~\ref{fig:methods}.)
\begin{itemize}
	\item [(a)] Iterative optimization-based algorithms that are designed  assuming that the source structure is captured by   (typically simple)  priors (or  regularization functions), such as sparsity~\cite{Yang20_TIP_SeSCI} or total variation (TV) ~\cite{Liu19_PAMI_DeSCI,Yuan16ICIP_GAP};
	\item [(b)] End-to-end CNNs (E2E-CNNs) that are trained with a large amount of training data consisting of measurements and (optionally) masks as  input and the ground truth \ie,  the underlying desired signal, as  output; during testing,  feeding the captured measurements into the trained E2E-CNN, it is expected to produce the desired signal  instantaneously~\cite{Miao19ICCV,Meng2020_OL_SHEM,Cheng20ECCV_Birnat};
	\item [(c)] Deep unfolding algorithms, where, instead of one CNN, a sequence of $K$ CNNs are trained to map the measurements to the desired signal~\cite{Ma19ICCV,Wang19_CVPR_HSSP}; unlike E2E-CNNs,  here, each stage consists of two parts: i) linear transformation and ii) passing the signal through a CNN operating as a denoiser, and
	\item [(d)] Plug-and-play (PnP) methods~\cite{Sreehari16PnP,Venkatakrishnan_13PnP} in which a pre-trained denoising network is plugged into one of the iterative algorithms of class (a)~\cite{Qiao2020_APLP,Yuan2020_CVPR_PnP}.  
\end{itemize}

\begin{figure}
	\centering
	\includegraphics[width=1\linewidth]{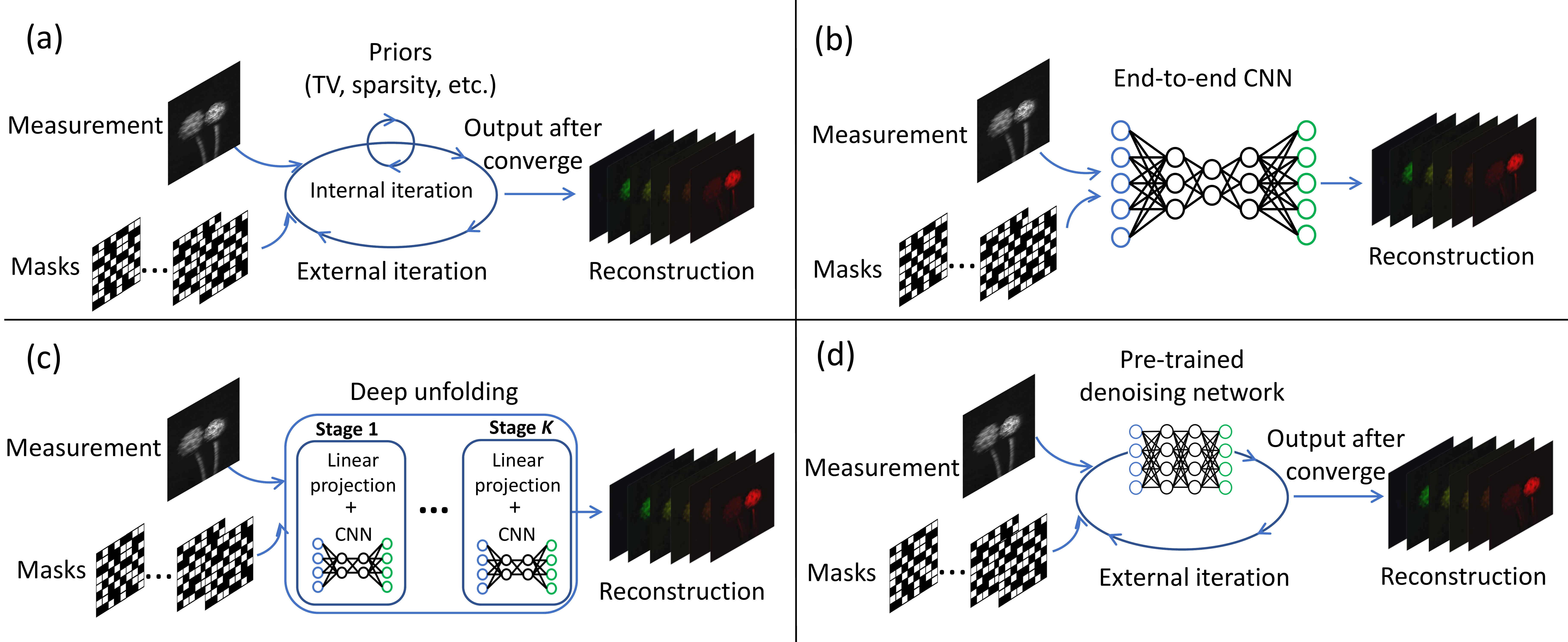}
	\caption{\small Four different frameworks for SCI reconstruction: (a) conventional iterative optimization algorithms using priors, (b) end-to-end CNN with the measurement and masks as the input and the desired signal as output, (c) deep unfolding approach using $K$  CNNs to reconstruct the datacube, and (d) Plug-and-play framework integrating pre-trained denoising networks into iterative algorithms to reconstruct the desired 3D cube.}
	\label{fig:methods}
\end{figure}

Iterative optimization-based methods have been studied extensively in the literature   for a diverse set of priors. 
The speed and quality of such algorithms are variant, with faster versions (\eg, those using TV~\cite{Bioucas-Dias2007TwIST,Yuan16ICIP_GAP}) typically not achieving high reconstruction quality, and high performance versions   usually being very slow~\cite{Liu19_PAMI_DeSCI}. However, even a relatively  fast algorithm of class (a) is currently considerably slower than CNN-based algorithms of classes (b)-(d)~\cite{Qiao2020_APLP}. 
E2E-CNN algorithms require a large training data set, a long training time and  large GPU memory. In some applications this is infeasible, for instance when dealing with large-scale data sets~\cite{Yuan2020_CVPR_PnP}. PnP algorithms are flexible and can be used in large-scale data sets, since they do not need to retrain the network. However, their performance is mainly limited by the pre-trained denoising network. For instance,  efficient and flexible  denoising algorithms for hyperspectral images are not yet available.

\subsection{Related work}
Various   optimization-based SCI algorithms, such as TwIST~\cite{Bioucas-Dias2007TwIST}, GAP-TV~\cite{Yuan16ICIP_GAP}, GMM~\cite{Yang14GMM,Yang14GMMonline} and DeSCI~\cite{Liu19_PAMI_DeSCI}, using different priors have been developed. The state-of-the-art algorithm, DeSCI~\cite{Liu19_PAMI_DeSCI}, applies  weighted nuclear norm minimization~\cite{Gu14CVPR} of non-local similar patches in  video frames into the alternating direction method of multipliers (ADMM) framework~\cite{Boyd11ADMM}. While DeSCI can recover clear images, it takes  about two hours for it to recover 8 frames (each of size $256 \times 256$), which precludes its wide application. Inspired by the recent successes  of deep-learning-based solutions on image restoration~\cite{NIPS2012_4686,Zhang17TIP}, in recent years, researchers have explored  application of deep learning in computational imaging as well~\cite{Chang17ICCV,Iliadis18DSPvideoCS,Jin17TIP,Kulkarni2016CVPR,LearningInvert2017,George17lensless}. 
Deep unfolding (or unrolling)~\cite{hershey2014deep_unfold} has been used for compressive sensing (CS) and various solvers such as ADMM-net~\cite{Yang_NIP16_ADMM-net}, ISTA-net~\cite{zhang2018ista} and learned D-AMP~\cite{Metzler_Learned_DAMP_Nip17} have been proposed.\footnote{It has been observed in several papers~\cite{Jalali19TIT_SCI} and experiments~\cite{Liu19_PAMI_DeSCI} that AMP does not converge well in SCI applications, due to the special structure of sensing matrix in SCI and ADMM usually outperforms ISTA.}
Regarding the E2E-CNN, the recurrent neural network (RNN) proposed in~\cite{Cheng20ECCV_Birnat} can now provide competitive results as DeSCI for video SCI and the best spectral SCI result is obtained by the spectial-spectral model proposed in~\cite{Meng20ECCV_TSAnet}. 

Deep unfolding has also been used for video SCI in~\cite{Li2020ICCP,Han_AAAI2020_FISTA,Ma19ICCV} and spectral SCI~\cite{Wang19_CVPR_HSSP}. However, these methods are inspired by sparse coding in each stage and the resulting reconstruction quality of real data is low.  Furthermore, no convergence guarantees have been established for such algorithms yet. 
The most recent paper~\cite{Li2020ICCP} employs a similar idea of using a denoiser at each stage. However, the paper  only explores video SCI and provides  no theoretical analysis of the proposed method. 

Note that existing models were developed either for video SCI or spectral SCI. For instance, it is reasonable to use RNNs in video SCI, but it is not intuitive to apply RNNs to spectral SCI. 
In this paper, we aim to develop a unified model that works efficiently for both video SCI and spectral SCI.

\subsection{Contributions of this work}
Motivated by the pros and cons of different approaches, in this paper, we propose a deep unfolding SCI recovery algorithm called GAP-net. Here are some key properties of GAP-net that are described in details later:
\begin{itemize}
	\item GAP-net is a deep unfolding algorithm that employs generalized alternating projection (GAP)~\cite{Liao14GAP} framework. GAP is known to be a more efficient framework than  ADMM \cite{Yuan2020_CVPR_PnP} for SCI reconstruction. 
	\item GAP-net, a recovery algorithm for  both video and spectral SCI, treats the trained CNN at each stage of the decoder as  a {\em denoising} network.  This is in the same spirit as learned D-AMP \cite{Metzler_Learned_DAMP_Nip17}, a recovery method for image CS and \cite{Li2020ICCP}  for video SCI.  There exist other SCI recovery algorithms in the literature that are based on deep unfolding. In such methods, the  CNN in each stage of the unfolding includes the transformation (implemented by convolutional or fully-connected neural  networks) and shrinkage (implemented by ReLU, etc.). 
	\item Based on the above denoising perspective,  GAP-net is related to the PnP framework for SCI~\cite{Yuan2020_CVPR_PnP}. We  theoretically prove and experimentally verify its convergence to the desired result\footnote{We observed some error in the proof of the global convergence of PnP-GAP in~\cite{Yuan2020_CVPR_PnP}. Specifically, the lower bound of the second term in Eq. (25) in~\cite{Yuan2020_CVPR_PnP} should be 0. Therefore, the proof of global convergence for PnP-GAP presented in \cite{Yuan2020_CVPR_PnP}  does not hold. The authors have updated the manuscript on arXiv with a local convergence result. In this paper, using  concentration of measure tools, we prove  global convergence of GAP-net.}.
	\item GAP-net achieves state-of-the-art (in some cases, competitive) performance in our experimental results done both with synthetic data and  real data, in both video SCI and spectral SCI. 
\end{itemize}

\section{Video SCI and Spectral SCI}
Fig.~\ref{fig:principle} depicts the schematic diagrams of both video (left) and spectral (right) SCI. As mentioned earlier, the underlying principle in both of them is to modulate the datacube using a speed higher than the capture rate of the camera.
In video SCI, different modulation  solutions, such as a shifting mask~\cite{Patrick13OE} and different digital micromirror device (DMD) patterns~\cite{Hitomi11ICCV}, both of which  are active modulation methods, have been proposed. On the other hand, in spectral SCI, the modulation can be implemented by a fixed mask plus a disperser~\cite{Wagadarikar08CASSI}, which is a passive modulation approach and thus more power efficient. The key idea is that, using a disperser,  a fixed modulation pattern can be shifted to be different for different wavelengths.

\begin{figure}[tbp!]
	\centering
	\includegraphics[width=1\linewidth]{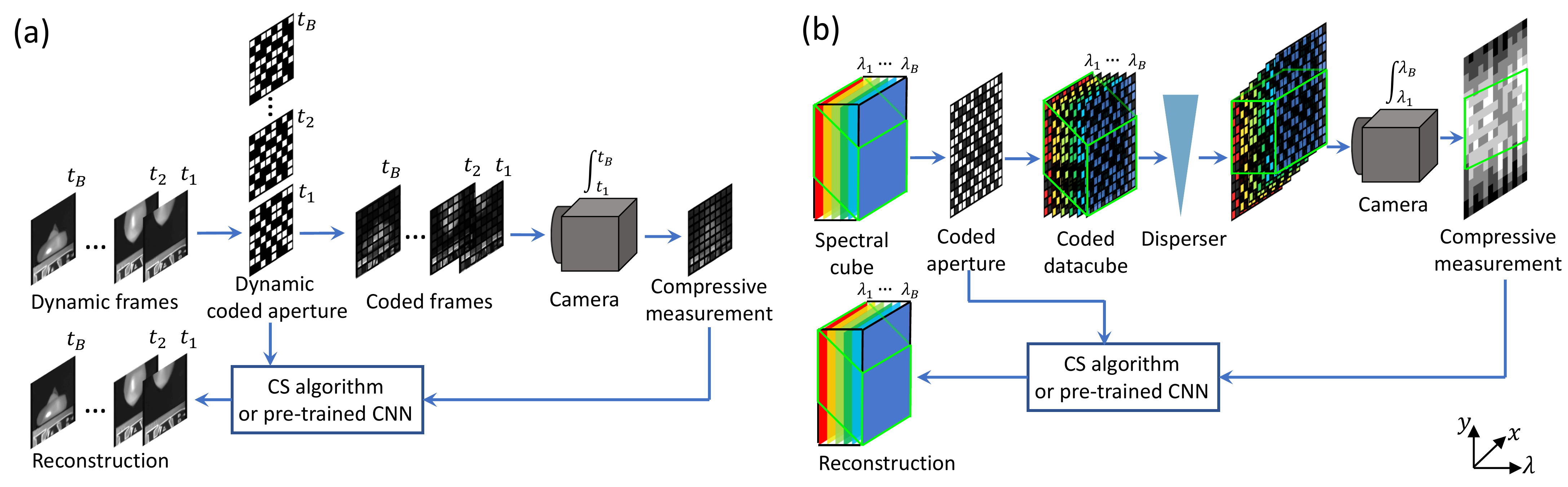}
	\caption{\small Schematic diagrams of (a) video SCI and (b) spectral SCI. In video SCI, a sequence of masks are imposed on the high-speed scene during one integration time. The captured 2D measurement thus encompasses the information of the desired spatial-temporal datacube, which can be reconstructed by the CS or deep learning algorithms. In spectral SCI, the spatial-spectral datacube is first modulated by a fixed physical mask and then the modulated datacube is sheared by a disperser. The 2D coded measurement thus includes the information of the spectral datacube, which is the desired 3D signal. }
	\label{fig:principle}
\end{figure}

\subsection{Forward Model of Video SCI~\label{Sec:SCImodel}}
Consider a video SCI encoder, where  a  $B$-frame video $\Xmat \in \mathbb{R}^{n_x \times n_y \times B}$  is modulated  and mapped  by $B$ sensing matrices (masks) $\Cmat\in \mathbb{R}^{n_x \times n_y \times B}$  to a single measurement frame  $\Ymat \in \mathbb{R}^{n_x\times n_y}$ as
\begin{equation}\label{Eq:System}
  \Ymat = \sum_{b=1}^B \Cmat_b\odot \Xmat_b + \Zmat,
\end{equation}
where $\Zmat \in \mathbb{R}^{n_x \times n_y }$ denotes the noise; $\Cmat_b = \Cmat(:,:,b)$ and $\Xmat_b = \Xmat(:,:,b) \in \mathbb{R}^{n_x \times n_y}$ denote the $b$-th sensing matrix (mask) and the corresponding video frame, respectively; $\odot$ denotes the Hadamard (element-wise) product. 
Using vectoring operator, define $\yv = \text{Vec}(\Ymat) \in \mathbb{R}^{n_x n_y}$ and $\zv= \text{Vec}(\Zmat) \in \mathbb{R}^{n_x n_y}$. Similarly,  define  $\xv \in \mathbb{R}^{n_x n_y B}$ as
\begin{equation}
\xv = \text{Vec}(\Xmat) = [\text{Vec}(\Xmat_1)\ts,..., \text{Vec}(\Xmat_B)\ts]\ts.
\end{equation}
Using these definitions, the measurement process defined in \eqref{Eq:System} can be expressed  as 
\begin{equation}\label{Eq:ghf}
\yv = \Hmat \xv + \zv.
\end{equation}
Unlike standard CS, the sensing matrix $\Hmat \in \mathbb{R}^{n_x n_y \times n_x n_y B}$ in video SCI is  highly structured  and sparse. More precisely, it can be written as  the  concatenation of $B$  diagonal matrices as
\begin{equation}\label{Eq:Hmat_strucutre}
\Hmat = [\Dmat_1,...,\Dmat_B],
\end{equation}
where, for  $b =1,\dots B$, $\Dmat_b = \text{diag}(\text{Vec}(\Cmat_b)) \in {\mathbb R}^{n \times n}$ with $n = n_x n_y$.
Here,  the {\em sampling rate} is equal to $1/B$. It has been proven in~\cite{Jalali19TIT_SCI} that, if the signal is structured enough,  there exist SCI recovery algorithms with bounded reconstruction error, even for $B>1$.

\subsection{Forward Model of Spectral SCI}
Recall Fig. \ref{fig:principle}(b) which shows a schematic diagram of an {\em spectral} SCI system. Let $\Xmat^0 \in {\mathbb R}^{n_x \times n_y \times n_\lambda}$ and $\Mmat^0\in {\mathbb R}^{n_x \times n_y}$ denote the original spectral cube and the fixed mask modulation, respectively. 
After  scene $\Xmat^0$ passes through the mask, 
let ${\Xmat}' \in {\mathbb R}^{n_x \times n_y \times n_\lambda}$  denote its modulated version, in which images at different wavelengths are modulated separately, \ie, 
for $b = 1,\dots, n_{\lambda}$, we have
\begin{equation}
{\Xmat}' (:,:,b) = {\Xmat}^0 (:,:,b) \odot \Mmat^0,
\end{equation} 
where $\odot$ represents the element-wise multiplication and ${\Xmat}^0 (:,:,b)$ denotes the $b$-th frame in the spectral cube of $\Xmat^0$.

After passing the disperser, the cube $\Xmat'$ is tilted and is considered to be sheared along the $y$-axis. 
Let $\Xmat''\in {\mathbb R}^{N_x \times (Ny + N_{\lambda}-1) \times N_{\lambda}}$ denote the tilted cube 
and assume that $\lambda_c$ is the reference wavelength, \ie, image $\Xmat'(:,:,n_{\lambda_c})$ is not sheared along the $y$-axis,
%
and therefore
\begin{equation}
\Xmat'' (u,v, b) = \Xmat'(x, y + d(\lambda_b - \lambda_c), b),
\end{equation}
where $(u,v)$ indicates the coordinate system on the detector plane, $\lambda_b$ is the wavelength at $b$-th channel and $\lambda_c$ denotes the center-wavelength. Then, $d(\lambda_b -\lambda_c)$ signifies the spatial shifting  for the $b$-th channel.

Considering that the detector sensor integrates all the light in the wavelength $[\lambda_{\rm min}, \lambda_{\rm max}]$, the compressed measurement at the detector $y(u,v)$ can thus be written as 
\begin{equation}
{ y(u,v) = \int_{\lambda_{\rm min}}^{\lambda_{\rm max}} x'' (u,v, b_{\lambda}) d \lambda,}
\end{equation}
where $x''$ denotes the analog (continuous) representation of $\Xmat''$.

In the discretized version, the captured 2D measurement $\Ymat \in {\mathbb R}^{n_x \times (ny + n_{\lambda}-1)}$ can be written as 
\begin{equation}
{ \Ymat = \sum_{b=1}^{n_{\lambda}}  \Xmat'' (:,:, n_{\lambda}) + \Zmat}.
\end{equation}
In other words, $\Ymat$ is a {\em compressed} frame which is formed by a function of the desired information corrupted by measurement noise $\Zmat\in {\mathbb R}^{n_x \times (n_y + n_{\lambda}-1)}$. 

For the convenience of model description, we further let 
$\Cmat\in \mathbb{R}^{n_x \times (n_y + n_{\lambda}-1) \times n_{\lambda}}$ denote the {\em shifted} version of the mask $\Mmat^0$ corresponding to different wavelengths, \ie,
\begin{equation}
\Cmat (u,v,b) = \Mmat^0(x, y + d(\lambda_b - \lambda_c)).
\end{equation}
Similarly, for each signal frame at a different wavelength, the shifted version is $\Xmat \in {\mathbb R}^{n_x \times (n_y + n_{\lambda}-1) \times n_{\lambda}}$,
\begin{eqnarray}
\Xmat (u,v, b) = \Xmat^0(x, y + d(\lambda_b - \lambda_c), b).
\end{eqnarray}
Following this, the measurement $\Ymat$ can be represented as
\begin{equation}
{ \Ymat = \sum_{b=1}^{n_{\lambda}}   \Xmat (:,:, b)  \odot \Cmat (:,:, b) + \Zmat}. \label{Eq:sensing_Matrix}
\end{equation}

\noindent \textbf{Vectorized Formulation.}
Let ${\rm vec}(\cdot)$  denote the matrix vectorization operation, \ie, concatenating columns of a matrix into one vector. 
Then, we define
$\yv = {\rm vec}(\Ymat)$, 
$\zv = {\rm vec}(\Zmat) \in {{\mathbb R}^{n_x(n_y+n_\lambda-1)}}$ and
\begin{equation}
{\xv} = \left[\begin{array}{c}
{\xv}^{(1)}\\
\vdots\\
{\xv}^{(n_{\lambda})}
\end{array}\right] \in {{\mathbb R}^{n_x (n_y + n_{\lambda}-1)n_{\lambda}}} 
\end{equation}
where, for 
$b=1,\dots, n_\lambda$, $\xv^{(b)} = {\rm vec}( \Xmat (:,:, b))$.

In addition, we define the sensing matrix as
\begin{equation}
\Hmat = \left[\Dmat_1, \dots, \Dmat_{n_{\lambda}}\right] \in   {{\mathbb R}^{n \times  n_{\lambda} n}}, \label{Eq:Phimat}
\end{equation}
where $n = n_x (n_y + n_{\lambda}-1)$ and 
$\Dmat_{b} = {\rm Diag} ({\rm vec}(\Cmat (:,:, b)))$
is a diagonal matrix with ${\rm vec}(\Cmat (:,:, b))$ as its diagonal elements.
As such, we then can rewrite the matrix formulation of \eqref{Eq:sensing_Matrix} as
\begin{equation}
\yv = \Hmat {\xv} + \zv. \label{Eq:CS_forwared}
\end{equation}
This shares the same format as video SCI. But the only difference is that now the measurement is $\Ymat\in {\mathbb R}^{n_x \times (n_y +n_{\lambda} -1)}$ and after we recover  $\Xmat$, the desired spatial-spectral cube is obtained by shifting each channel of $\Xmat$ back into its original position to get $\Xmat^0$.


In the following, we use the unified formulation described in~\eqref{Eq:ghf} to develop a decoding algorithm that, given $\yv$ and $\Hmat$,   recovers $\xv$.

\section{GAP-net for SCI}

An SCI reconstruction algorithm aims at solving the following optimization: 
\begin{equation}
  \hat{\xv} = \argmin_{\xv} \frac{1}{2}\|\yv- \Hmat \xv\|_2^2  + \tau \Omega(\xv), \label{Eq:forward}
\end{equation} 
where $\Omega(\xv)$ is a regularization term that is designed to capture the source structure.    To solve \eqref{Eq:forward}, we propose GAP-net, an algorithm that takes advantage of both deep unfolding and generalized alternating projection ideas. GAP-net is also inspired by the PnP framework~\cite{Chan2017PlugandPlayAF}, where unlike PnP,  a  different  denoising network is  trained in each stage of the algorithm. In fact, GAP-net integrates  a sequence of trained  denoising  networks  into the GAP framework to provide high-quality reconstruction as well as high flexibility. 

Specifically, GAP-net involves  $K$  stages shown in Fig.~\ref{fig:GAP_net}.  At stage $k$, $k=1,\ldots,K$, let $\vv^{(k)}$ denote our current estimate of the desired signal. Also, let  $\xv^{(k)}$ denote an auxiliary vector of the same dimension as $\vv^{(k)}$. GAP-net updates $\vv^{(k)}$ and $\xv^{(k)}$  as follows:
\begin{itemize}
	\item Updating  $\xv$: $\xv^{(k)}$ is updated via a Euclidean projection of
	$\vv^{(k)}$ on the linear manifold ${\cal M}: \yv = \Hmat \xv$. That is,
	\begin{equation}
	  \xv^{(k+1)} =  \vv^{(k)} + \Hmat\ts (\Hmat \Hmat\ts)\inv (\yv - \Hmat \vv^{(k)}). \label{Eq:x_k+1}
	\end{equation}
	Recalling~\eqref{Eq:Hmat_strucutre}, note that $\Rmat \stackrel{\rm def}{=} \Hmat\Hmat\ts$ is a diagonal matrix and thus the inversion operation is straightforward. This has been observed and used before in a number of SCI algorithms~\cite{Liu19_PAMI_DeSCI,Yuan16ICIP_GAP}. 
	\item Updating $\vv$: After the projection, the goal of the next step is to bring $\xv^{(k+1)} $  closer  to the  desired signal domain. In GAP-net this is achieved  by employing an appropriate trained denoiser ${\cal D}_{k+1}$ and letting
	\begin{equation}
	{  \vv^{(k+1)} = {\cal D}_{k+1}(\xv^{(k+1)}).} \label{Eq:Denoise_GAP}
	\end{equation}
\end{itemize}  

GAP-net and the re-purposed ADMM-net (Fig.~\ref{fig:ADMM_net}) proposed in this paper are different from the original ADMM-net~\cite{Yang_NIP16_ADMM-net}, the deep-tensor-ADMM-net~\cite{Ma19ICCV} and tensor-FISTA-net~\cite{Han_AAAI2020_FISTA} that  aim to mimic the sparse coding in each stage, \ie, including the transformation (learned by neural networks and usually a fully connected or convolutional network) and shrinkage (implemented by ReLU, etc.). In the following we compare GAP-net 
with ADMM-net. 

\begin{figure}[tbp!]
	\centering
	\includegraphics[width=1\linewidth]{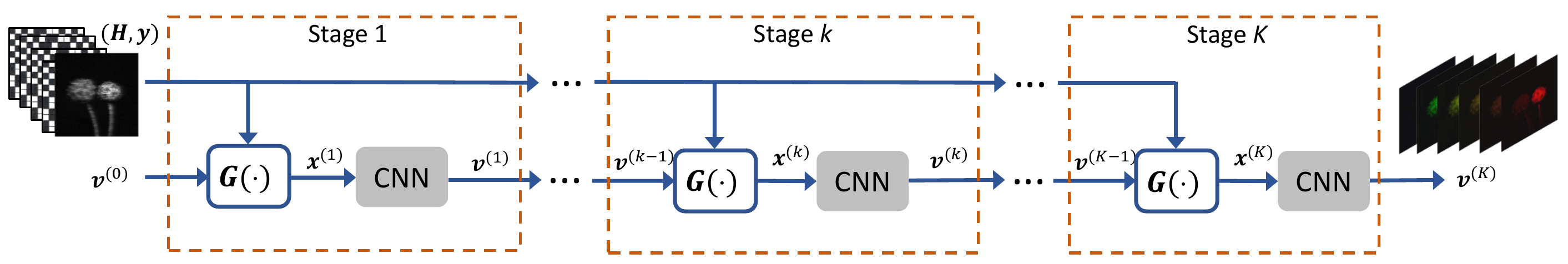}\\
	\caption{\small GAP-net with $K$ stages, where each stage is composed of a projection $G(\cdot)$ representing the operation in \eqref{Eq:x_k+1} and a CNN playing the role of denoising. $\vv^{(0)} = \Hmat\ts\yv$.}
	\label{fig:GAP_net}
\end{figure}

\begin{figure}[htbp!]
	\centering
	\includegraphics[width=1\linewidth]{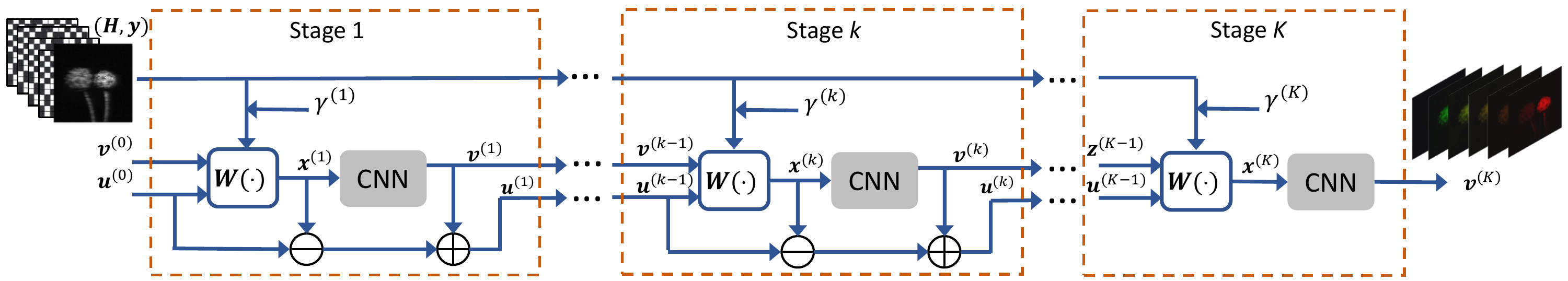}\\
	\caption{\small ADMM-net with $K$ stages, where each stage is composed of a projection $W(\cdot)$ representing the operation in \eqref{Eq:admm_x_k+1} and a CNN playing the role of denoising. $\vv^{(0)} = \Hmat\ts\yv$.}
	\label{fig:ADMM_net}
\end{figure}

\subsection{GAP-net vs. ADMM-net}
Without deriving details (refer to~\cite{Qiao2020_APLP}), we list the three steps that are  in each stage of ADMM-net:
\begin{itemize}
	\item Solve $\xv$ by $\xv^{(k+1)} = \argmin_{\xv} \frac{1}{2}\|\yv- \Hmat \xv\|_2^2 + \frac{\tau}{2}\|\xv- (\zv^{(k)}+\uv^{(k)})\|_2^2$. This has a closed form solution
	\begin{equation}
	{\xv}^{(k+1)}  = \left[\Hmat\ts\Hmat + \gamma \Imat \right]\inv\left[\Hmat\ts\yv + (\vv^{(k)} + \uv^{(k)})\right].
	\label{Eq:admm_x_k+1} 
	\end{equation}
	Due to the special structure of $\Hmat$, this can be solved in one shot~\cite{Liu19_PAMI_DeSCI}.
	\item Solve $\vv$ by ${\vv}^{(k+1)} = \argmin_{\vv} [\frac{\gamma}{2}\|\vv - (\xv^{(k+1)} - \uv^{(k)})\|_2^2 + \tau \Omega(\vv)]$. As in GAP-net, we solve this by CNN denoising
	\begin{equation}
	{\vv^{(k+1)} = {\cal D}_{k+1}(\xv^{(k+1)}- \uv^{(k)})}. \label{Eq:CNN_GAP}
	\end{equation}
	\item Update the auxiliary variable $\uv$ by
	$\uv^{(k+1)} = \uv^{(k)} -({\xv}^{(k+1)} - \vv^{(k+1)})$.
\end{itemize}
The flow-chart of ADMM-net is shown in Fig.~\ref{fig:ADMM_net}. 

Comparing with Fig.~\ref{fig:GAP_net}, clearly, GAP-net has fewer parameters and operations and thus it will be more efficient than ADMM-net. We also verify this later in our experiments by comparing the running times of different methods.

\section{Theoretical Analysis}
Let ${\cal Q}$ denote  a compact subset of ${\mathbb R}^{nB}$  representing the class of signals we are interested in. For instance, in  SCI of $B$-frame videos,  $\cal Q$ is defined as the set of all $B$-frame natural videos. GAP-net is an SCI reconstruction algorithm designed to recover  $\xv\in{\cal Q}$ from  measurements $\yv=\Hmat\xv+\zv$, where $\Hmat$ is defined in  \eqref{Eq:Hmat_strucutre}.  

GAP-net is a $K$-stage algorithm that employs a potentially different  trained CNN at each stage. The role of these CNNs is to  map the estimate obtained at stage $k$  as $ \xv^{(k+1)} =  \vv^{(k)} + \Hmat\ts (\Hmat \Hmat\ts)\inv (\yv - \Hmat \vv^{(k)})$ back to the signal space $\cal Q$.  
In this section, we prove  that using auto-encoder-based (or generative-function-based) denoisers, for $B$ small enough relative  to i) the level of structuredness of signals in $\cal Q$ and ii) parameters of the auto encoders (that in turn depend on how efficiently they  take advantage of the source structure), with high probability, GAP-net converges to the vicinity of the desired input signal\footnote{For PnP-type algorithms, various conditions on the denoiser, such as the contraction denoiser~\cite{PnP2019ICML}, Lipschitz continuous~\cite{Mertzler14Denoising}, bounded denoiser~\cite{Chan2017PlugandPlayAF} and  non-expansive denoiser~\cite{Metzler_Learned_DAMP_Nip17}, have been studied in the literature.}.

We consider using auto-encoders (AEs)~\cite{Pascal_AE_denosing} to play the role of denoisers  in GAP-net. An AE for set $\cal Q$ can be characterized by an encoder $e: \mathds{R}^{nB}\to {\cal U}^{\eta}$ and a decoder (or generative function) $g:{\cal U}^{\eta}\to\mathds{R}^{nB}$.  The generative function $g: {\cal U}^{\eta} \rightarrow {\cal X}^{nB}$ is said to cover set $\cal Q$ with distortion $\delta$, if
\begin{equation}
	\sup_{\xv \in {\cal Q}} \min_{\fv \in {\cal U}^{\eta}}\frac{1}{\sqrt{nB}}\|g(\fv) -\xv\|_2 \le \delta.
\end{equation}
A generative-function-based GAP-net, at stage $k$, employs generative function $g_k$ to perform the denoising (or projection to the signal domain) operation as follows:
\begin{equation}
{\vv}^{(k)} = g_k(\fv^{(k)}), ~~{\text{with}}~~   \fv^{(k)} = \argmin_{\fv \in {\mathbb R}^{\eta_k}}\|\xv^{(k)} -g_k(\fv)\|_2.\label{eq:gap-net-gf-based}
\end{equation}
In the case of an AE-based GAP-net, the encoder of the AE is essentially trained to perform the minimization operation in \eqref{eq:gap-net-gf-based}.

In the following we prove a global convergence result for such an AE-based GAP-net. In the theorem we assume that $g_k$ is $L_k$ Lipschitz. Also, at stage $k$, we define
\begin{equation}
 \tilde{\vv}_k = g_k(\tilde{\fv}^{(k)}) ~~{\text{with}}~~ \tilde{\fv}^{(k)} = \argmin_{\tilde{\fv} \in {\mathbb R}^{\eta_k}}\|{\xv}^* -g_k(\tilde{\fv})\|_2.
\end{equation}
In other words, $\tilde{\vv}_k$ denotes the best estimate (closest to the ground truth $\xv^*$) that $g_k$ can provide.
Finally, we assume that  the elements of the mask (\ie, sensing matrix) are i.i.d. Gaussian, \ie, $D_{b,i} \stackrel{i.i.d.}{\sim} {\cal N}(0,1)$.  We also assume that every $\xv\in{\cal Q}$ is bounded such that $\|\xv\|_\infty \le \rho/2$. Let ${\cal X}\stackrel{\rm def}{=}[-\rho/2,\rho/2]$ denote the source alphabet.
\begin{theorem} \label{The:GAP_SCI_pro_2}
	 Consider the sensing model of SCI. Assume that generative function $g_k: {\cal U}^{\eta_k} \rightarrow {\cal X}^{nB}$ covers set ${\cal Q}$ with distortion $\delta_k$ and ${\cal U}\subset[-1,1]$. Further assume that $\delta_K\le \dots \le \delta_1$ and $\eta_1\le \dots \le \eta_K$. Choose free parameter $\zeta\in(0,1)$ and for $k=1,\ldots,K-1$,  define
	\begin{equation}
	    \gamma_k=  L_k  \delta_k^{\zeta}\sqrt{\eta_k \over nB},
	\end{equation}
	and 
	\begin{equation}
	\alpha_k= 2(1+2{B})(\gamma_k(1+{1\over 1-\gamma_k}) )^{1\over 2}.
	\end{equation}
	Then, given $\lambda\in (0, 0.5-\alpha_k)$, if ${1 \over \sqrt{nB}}\|\vv^{(k)}-\vvt_k\|\geq \delta_k$ and ${1 \over \sqrt{nB}}\|\vv^{(k-1)}-\vvt_k\|\geq \delta_k$, we have
	\begin{equation}
	 	{1\over \sqrt{nB}}\|\vvt_k- \vv^{(k)}\|_2
	\leq {2\over \sqrt{nB}}(\lambda + \alpha_k) \|\vvt_{k-1}- \vv^{(k-1)}\|_2+(\lambda + \alpha_k)(\delta_k+\delta_{k-1})+2{B}\delta_k.
	\end{equation}	
	with a probability larger than 
	\begin{equation*}
	  1-\sum_{k=1}^{K-1}\exp\left[-{2\lambda^2 n\delta_k^4(1-\gamma_k)^4\over  4 {B}^2\rho^4 }+ 2\ln 2((1-\zeta) \log {1\over \delta_k }+1)\eta\right].
	\end{equation*}
\end{theorem}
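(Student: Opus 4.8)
The plan is to recast one stage of GAP-net as a perturbed projection and to bound the one-step error growth $\tfrac{1}{\sqrt{nB}}\|\vvt_k-\vv^{(k)}\|_2$ in terms of $\tfrac{1}{\sqrt{nB}}\|\vvt_{k-1}-\vv^{(k-1)}\|_2$. Two structural facts drive the argument. First, since $\Hmat$ has full row rank almost surely, $\Pimat\stackrel{\rm def}{=}\Hmat\ts(\Hmat\Hmat\ts)\inv\Hmat$ is the orthogonal projector onto the row space of $\Hmat$, so the update \eqref{Eq:x_k+1} reads $\xv^{(k)}=(\Imat-\Pimat)\vv^{(k-1)}+\Hmat\ts(\Hmat\Hmat\ts)\inv\yv$; substituting $\yv=\Hmat\xv^*$ yields the key identity $\xv^{(k)}-\xv^*=(\Imat-\Pimat)(\vv^{(k-1)}-\xv^*)$, i.e.\ the projection error after a stage lies in the \emph{null space} of $\Hmat$. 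Second, the denoising step forces $\vv^{(k)}$ (and similarly $\vvt_k$) to lie in the range of the generative function $g_k$, an (essentially) $\eta_k$-dimensional subset of $\mathds{R}^{nB}$; consequently, after accounting for the coverage slack $\|\vvt_k-\xv^*\|_2\le\sqrt{nB}\,\delta_k$ and applying triangle inequalities together with the nearest-point property of the denoiser, the next projection error $\vv^{(k)}-\vvt_k$ is controlled by the image of the null-space error $\xv^{(k)}-\xv^*$ under (the linearization of) the nearest-point map onto $\mathrm{range}(g_k)$. Writing $\vv^{(k-1)}-\xv^*=(\vv^{(k-1)}-\vvt_{k-1})+(\vvt_{k-1}-\xv^*)$ and using coverage once more, the whole statement reduces to: how strongly does projection onto the low-dimensional $\mathrm{range}(g_k)$ shrink a vector of the form $(\Imat-\Pimat)\wv$ with $\wv\in\mathcal{S}_{k-1}\stackrel{\rm def}{=}\{g_{k-1}(\fv)-g_{k-1}(\fv'):\fv,\fv'\in{\cal U}^{\eta_{k-1}}\}$, a set that does \emph{not} depend on the random mask?

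\textbf{The key concentration estimate.} The crux is that $\mathrm{range}(g_k)$ is low-dimensional, so projecting the null-space error onto it is a strong contraction — quantitatively of order $\gamma_k\propto\sqrt{\eta_k/(nB)}$ — provided $\eta_k$ is small relative to $nB$. Concretely, for $\wv\in\mathds{R}^{nB}$ arranged in $B$ blocks of length $n$, with $\wv_i=(w^{(1)}_i,\dots,w^{(B)}_i)$ and $\dv_i=(D_{1,i},\dots,D_{B,i})$, one has $\|(\Imat-\Pimat)\wv\|_2^2=\sum_{i=1}^{n}\bigl\|\wv_i-\langle\dv_i,\wv_i\rangle\dv_i/\|\dv_i\|_2^2\bigr\|_2^2$, a sum of $n$ \emph{independent} coordinate contributions each bounded by $\|\wv_i\|_2^2\le B\rho^2$ because $\mathrm{range}(g_k)\subseteq{\cal X}^{nB}$. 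The plan is then: (i) fix $\wv$ in a net and apply a bounded-differences (Hoeffding-type) inequality over these $n$ terms, giving deviation of this and the related quadratic forms from their means by more than $t$ with probability at most $2\exp(-2t^2/(nB^2\rho^4))$; (ii) cover ${\cal U}^{\eta_{k-1}}$ by an $\epsilon_k$-net of log-cardinality $\lesssim\eta_{k-1}\log(1/\epsilon_k)$, push it through the $L_{k-1}$-Lipschitz $g_{k-1}$ to get a resolution-$L_{k-1}\epsilon_k$ net of $\mathrm{range}(g_{k-1})$, union-bound (i) over all pairs of net points, and extend to all of $\mathcal{S}_{k-1}$ by Lipschitzness; (iii) use the free parameter $\zeta$ to set $\epsilon_k$ so that the net error matches $\gamma_k=L_k\delta_k^{\zeta}\sqrt{\eta_k/(nB)}$ — this is precisely the scale at which the projection-onto-the-generator-range error is amplified by the factor $(1+\tfrac{1}{1-\gamma_k})$ familiar from recovery with Lipschitz generative priors, and propagating the per-coordinate ($\|\wv_i\|_2\le\sqrt{B}\,\rho$) and mean terms through the $B$ blocks assembles the constant $\alpha_k=2(1+2B)\bigl(\gamma_k(1+\tfrac{1}{1-\gamma_k})\bigr)^{1/2}$. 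The hypotheses $\tfrac{1}{\sqrt{nB}}\|\vv^{(k)}-\vvt_k\|_2\ge\delta_k$ and $\tfrac{1}{\sqrt{nB}}\|\vv^{(k-1)}-\vvt_k\|_2\ge\delta_k$ convert the absolute deviation $t$ into a relative one, so one may take $t\propto\lambda\,\delta_k^{2}(1-\gamma_k)^{2}\sqrt{nB}$; together with the net cardinality this produces the per-stage failure probability $\exp\bigl[-2\lambda^2 n\delta_k^4(1-\gamma_k)^4/(4B^2\rho^4)+2\ln2\,((1-\zeta)\log(1/\delta_k)+1)\eta\bigr]$.

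\textbf{Assembling and the union bound.} On the intersection of the $K-1$ good events — whose complement is bounded by the sum in the statement via a union bound over $k$ — substituting the restricted-isometry estimate into the reduction above, using the monotonicity $\delta_K\le\dots\le\delta_1$, $\eta_1\le\dots\le\eta_K$ to replace the stage-$(k-1)$ constants by the (larger) stage-$k$ constants, and normalizing by $\sqrt{nB}$, yields
\[
\tfrac{1}{\sqrt{nB}}\|\vvt_k-\vv^{(k)}\|_2\le\tfrac{2}{\sqrt{nB}}(\lambda+\alpha_k)\|\vvt_{k-1}-\vv^{(k-1)}\|_2+(\lambda+\alpha_k)(\delta_k+\delta_{k-1})+2B\delta_k,
\]
which is the claim; the surviving factor $B$ in the last term is merely the price of bounding $\|\wv_i\|_2^2$ by $B\rho^2$ rather than using the sharper sub-Gaussian constants.

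\textbf{Main obstacle.} The delicate point, and the reason the paper invokes concentration of measure precisely where the earlier PnP-GAP argument was flawed, is that $\vv^{(k-1)}$ is \emph{not} independent of the mask: it is produced by $k-1$ projection/denoising rounds all driven by the same random $\Hmat$, so $\wv=\vv^{(k-1)}-\vvt_{k-1}$ cannot be treated as a fixed vector when invoking concentration. The fix is to prove the key estimate \emph{uniformly} over the mask-independent sets $\mathcal{S}_{k-1}$ and $\mathcal{S}_k$ via the covering/union-bound argument, so that it applies a posteriori to whichever element the algorithm lands on; making that argument quantitatively match the advertised rate (the $\delta_k^4(1-\gamma_k)^4/(B^2\rho^4)$ exponent and the $\eta\log(1/\delta_k)$ entropy correction) while carefully tracking the interplay of $\Pimat$, $\Imat-\Pimat$ and the nearest-point map across the $B$ blocks is the bulk of the technical work.
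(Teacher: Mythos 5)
Your overall architecture matches the paper's: the nearest-point property of the generative denoiser yields the one-step inequality; the statistic to be controlled decomposes over the $n$ independent columns of the mask so a Hoeffding bound applies; uniformity over the mask-dependent iterates is restored by quantizing the latent space $\Uc^{\eta_k}$, pushing the net through the $L_k$-Lipschitz $g_k$, and union-bounding (exactly the source of the entropy term $2\ln 2((1-\zeta)\log(1/\delta_k)+1)\eta$, with $\zeta$ trading net resolution against $\gamma_k$); and the stated probability is a union bound over the $K-1$ stages. The paper routes this through an intermediate theorem with an explicit quantization depth $b_k$ and then sets $b_k=\lceil(1-\zeta)\log(1/\delta_k)\rceil$. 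Your "main obstacle" paragraph about the iterates not being independent of the mask, and the covering-argument fix, is precisely the paper's strategy.

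However, your central quantitative mechanism is not the paper's, and as stated it fails. You keep the update in its main-text form, so that $\Imat-\Pimat$ with $\Pimat=\Hmat\ts(\Hmat\Hmat\ts)\inv\Hmat$ is a genuine orthogonal projector onto the null space of $\Hmat$, and you attribute the per-stage contraction to the claim that the null-space error, projected onto the low-dimensional $\mathrm{range}(g_k)$, shrinks by a factor of order $\gamma_k\propto\sqrt{\eta_k/(nB)}$. This cannot be right. The null space of $\Hmat$ has dimension $n(B-1)$, so for any fixed $\wv$ one has $\Eox{\|(\Imat-\Pimat)\wv\|^2}=(1-1/B)\|\wv\|^2$; and since the old error direction $\ev$ and the new error direction $\ev'$ are strongly correlated (both lie near the secant set of the signal class $\Qc$), the bilinear form $\langle(\Imat-\Pimat)\ev,\ev'\rangle$ concentrates around $(1-1/B)\langle\ev,\ev'\rangle$, which is $O(1)$, not $O(\gamma_k)$. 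The resulting recursion coefficient would be about $2(1-1/B)\geq 1$ for $B\geq 2$, and no contraction follows. The paper avoids this by explicitly rescaling the correction step to $\xv^{(k+1)}=\vv^{(k)}+B\Hmat\ts(\Hmat\Hmat\ts)\inv(\yv-\Hmat\vv^{(k)})$ at the outset of the proof; with this factor $B$ the per-coordinate statistics $X_i=\sum_{b}e_{Q,bi}e'_{Q,bi}-\tfrac{B}{R_i}\big(\sum_{b}D_{bi}e_{Q,bi}\big)\big(\sum_{b}D_{bi}e'_{Q,bi}\big)$ satisfy $\Eox{X_i}=0$ because $\Eox{D_{bi}^2/R_i}=1/B$, so the entire coefficient $\lambda+\alpha_k$ is a pure fluctuation-plus-discretization term. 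Relatedly, $\gamma_k$ is not a contraction ratio of a projection onto an $\eta_k$-dimensional set; in the paper it is the net discretization error $L_k2^{-b_k}\sqrt{\eta_k}/(\delta_k\sqrt{nB})$, and $\eta_k$ enters the probability only through the cardinality of the net. Your argument would need to be rebuilt around the $B$-rescaled operator and its zero-mean property; without that, the key concentration estimate bounds the wrong quantity.
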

\begin{proof} 
	The full proof is shown in Section~\ref{Appex:1}.
\end{proof}

\section{Network Structure and Training Details}
As explained earlier, GAP-net consists of $K$ stages. At each stage a trained CNN is used as a denoiser. While various CNN structures  have been studied in the literature for denoising, in our experiments, we  test the following structures: auto-encoder~\cite{Pascal_AE_denosing}, U-net~\cite{RFB15a}, ResNet~\cite{He_ResNet_cvpr16} and DnCNN~\cite{Zhang17TIP}. 

\subsection{Denoising Network Structure}
In our experiments, we implemented GAP-net using four different denoisers, including a 14-layer auto-encoder~\cite{Pascal_AE_denosing}, a 17-layer DnCNN~\cite{Zhang17TIP}, a 10-layer ResNet~\cite{He_ResNet_cvpr16} and
a 15-layer U-net~\cite{RFB15a}.
The structures of these networks are shown in Fig.~\ref{fig:denoise_net}.

\begin{figure}
	\centering
	\includegraphics[width=1\linewidth]{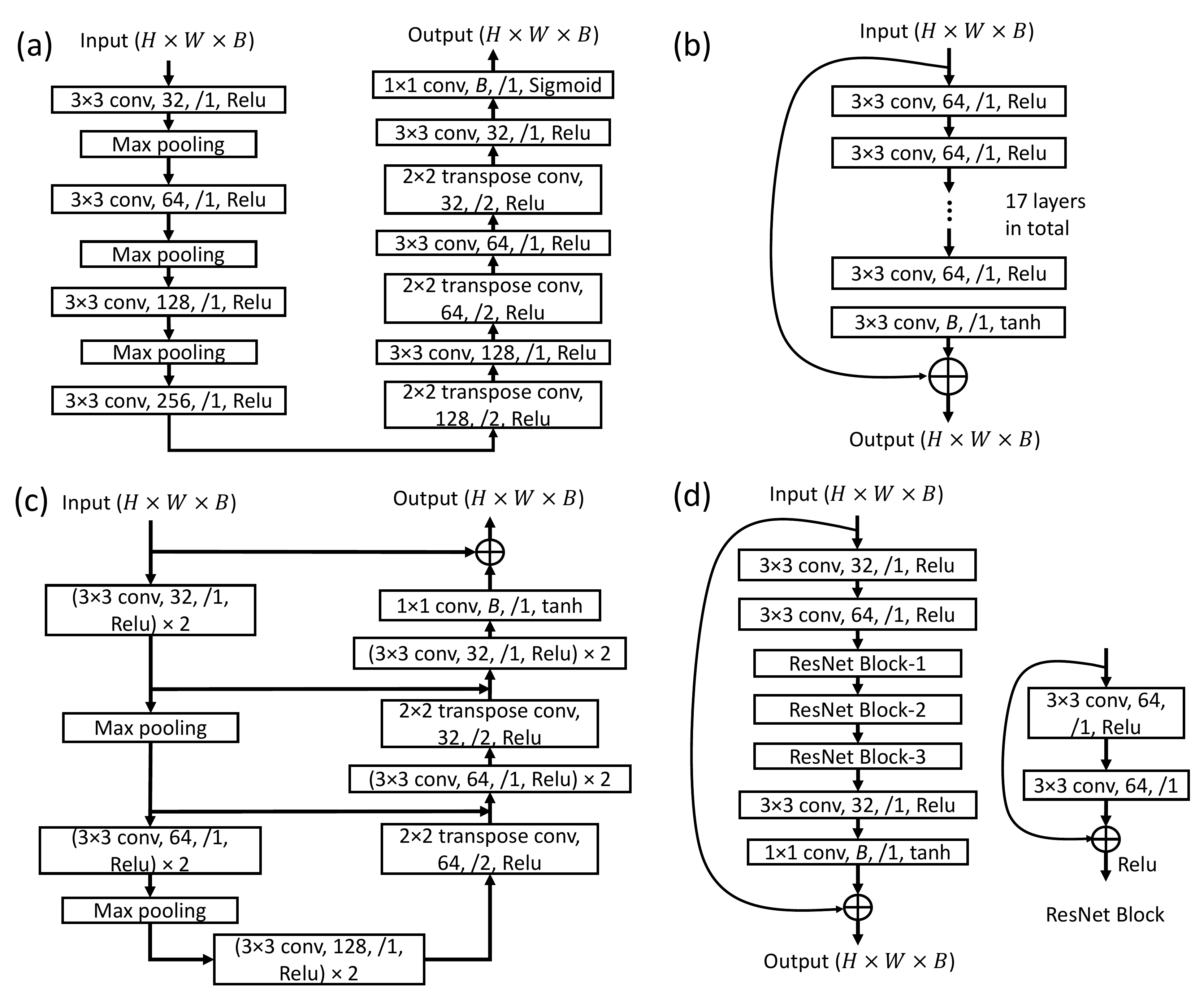}
	\caption{Denoiser network structures used in GAP-net: auto-encoder (a), DnCNN (b), U-net (c) and ResNet (d).}
	\label{fig:denoise_net}
\end{figure}

\subsection{Training Details}
\subsubsection{Training and testing datasets} For video SCI, our training dataset consists of 26000 videos generated from DAVIS2017~\cite{pont20172017} which has 90 scenes with 480p and 1080p resolution. We performed data augmentation by cropping and rotation. 

The synthetic testing data include six benchmark data with a size of $256\times256\times8$, including \texttt{Kobe}, \texttt{Runner}, \texttt{Drop}, \texttt{Traffic}, \texttt{Aerial} and \texttt{Vehicle} used in~\cite{Yuan2020_CVPR_PnP}.
The real data include three scenes (\texttt{Chopper Wheel} of size $256 \times 256 \times 14$~\cite{Patrick13OE}, \texttt{Water Balloon} and \texttt{Dominoes} of size  $512 \times 512 \times 10$~\cite{Qiao2020_APLP}.

For Spectral SCI, our training dataset  includes 4000 spectral cubes generated from CAVE~\cite{yasuma2010generalized} which have 30 scenes with a size of $512\times512\times31$ ranging over 400 to 700nm. We use  spectral interpolation to unify the datasets to the same wavelengths that our used in our experiments (28 channels from 450 to 650nm)~\cite{Meng20ECCV_TSAnet}. For model training, we did data augmentation by cropping, scaling, rotation and concatenation among the samples.

The synthetic testing data include 10 scenes cropped from KAIST~\cite{choi2017high} with the size of $256 \times 256 \times 28$, as shown in Fig.~\ref{fig:cassi_scene}. The real data include four scenes with a size of  $550 \times 550 \times 28$, including \texttt{Strawberry} and \texttt{Lego plant}, as well as \texttt{Lego man} and \texttt{Real plant}. 

\begin{figure}
	\centering
	\includegraphics[width=.8\linewidth]{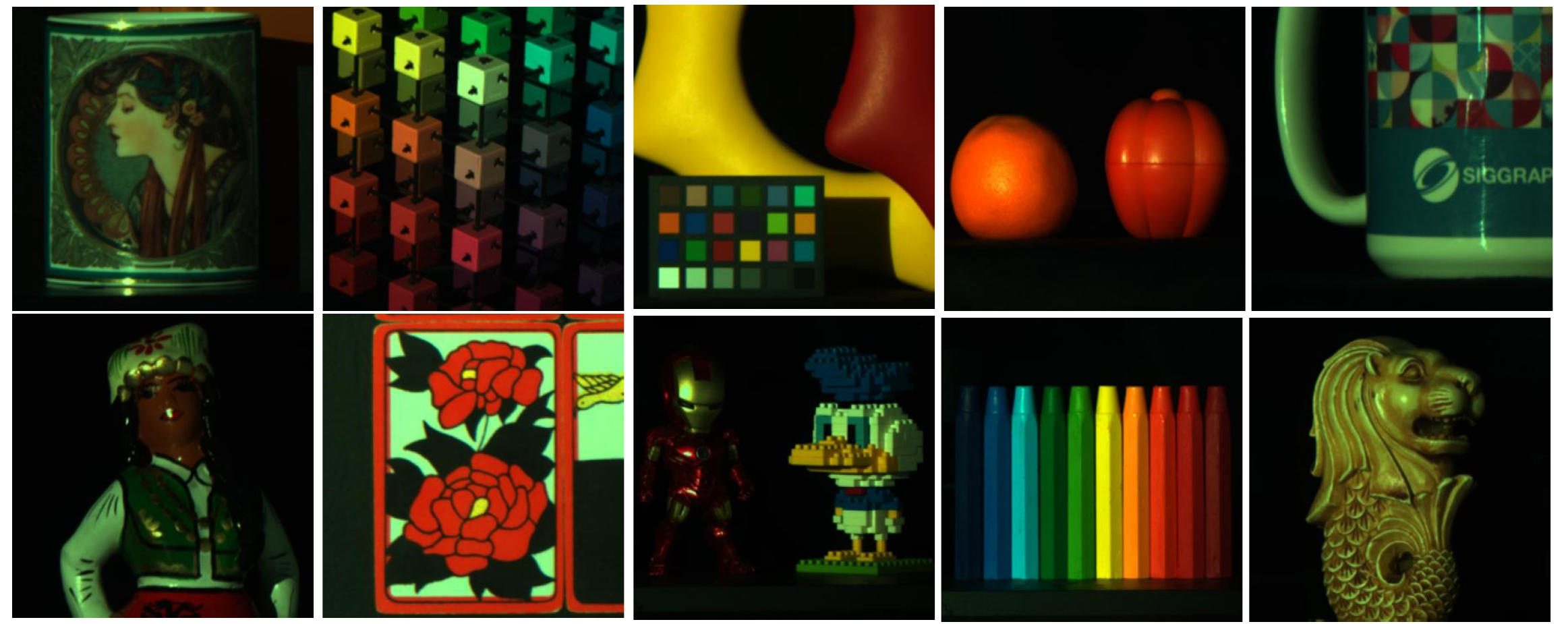}
	\caption{RGB images of 10 testing scenes for spectral SCI simulation.}
	\label{fig:cassi_scene}
\end{figure}

\subsubsection{Loss function}
For both video and spectral SCI, 
the loss function is the root mean square error (RMSE) between the ground truth and the output of the last stage
\begin{equation}
{
	{\cal L} = {\|\xv^*-\vv^{(K)}\|_2},
}
\end{equation}
where $\xv^*$ is the ground truth and $\vv^{(K)}$ denotes the output of the  last stage in GAP-net.

We also notice that for video SCI, during training, if we use the outputs of the last few stages, instead of only the last one,  the performance improves. For example, we can use the RMSE between the ground truth and the outputs of the last three stages of GAP-net with $K$ stages, which leads to
\begin{equation}
{\cal L} = \beta_{1}{\|\xv^*-\vv^{(K)}\|_2}+\beta_{2}{\|\xv^*-\vv^{(K-1)}\|_2}+\beta_{3}{\|\xv^*-\vv^{(K-2)}\|_2}.
\end{equation}
The parameters $\beta_{1},\beta_{2}$ and $\beta_{3}$ are set to  1, 0.5 and 0.5, respectively. During testing, we  only use the result from the last stage to measure the loss.
However, in our experiments in  spectral SCI, we did not find this approach effective.

\subsubsection{Implementation details}
We implemented GAP-net on both Pytorch and Tensorflow using NVIDIA 1080Ti GPUs, and trained all models by the Adam optimizer~\cite{kingma2014adam} in an end-to-end manner. The leaning rate is set to 0.001 initially, and scaled to $90\%$ of the previous one every 10 epochs for video SCI (every 30 epochs for spectral SCI).

\section{Experimental Results}
In this section, we show results of GAP-net compared with other algorithms for both video SCI and spectral SCI, on both simulation and real datasets.


\subsection{Video SCI}
We compare the performance of GAP-net in video SCI with that of   iterative algorithms (GAP-TV~\cite{Yuan16ICIP_GAP} and DeSCI~\cite{Liu19_PAMI_DeSCI}), end-to-end solutions (E2E-CNN~\cite{Qiao2020_APLP} and BIRNAT~\cite{Cheng20ECCV_Birnat}), and PnP method (PnP-FFDnet~\cite{Yuan2020_CVPR_PnP}). 
We also compare with Tensor-ADMM-net~\cite{Ma19ICCV} and Tensor-FISTA-net~\cite{Han_AAAI2020_FISTA}, though only partial (3 out of 6 scenes) results were shown in their papers.
Note that DeSCI is the state-of-art algorithm using optimization and BIRNAT is the most recent deep learning based approach leading to competitive result to DeSCI. 


\subsubsection{Synthetic data} 

\begin{table}[htbp!]
	\caption{\small Average PSNR in dB, SSIM and run time per measurement by different algorithms on six benchmark data of video SCI. NA: not available}
	\label{video_table}
	\centering
	\resizebox{1\textwidth}{!}
	{
		\begin{tabular}{c | c c c c c c | c c}
			\toprule
			\cmidrule(r){1-2}
			Methods     & Kobe     & Traffic & Runner & Drop & Vehicle & Aerial & Average & time (s)  \\
			\midrule
			GAP-TV~\cite{Yuan16ICIP_GAP} & 26.46, 0.885 & 20.89, 0.715 & 28.52, 0.909 & 34.63, 0.970 & 24.82 0.838 & 25.05, 0.828 & 26.73, 0.858 & 4.2     \\
			DeSCI~\cite{Liu19_PAMI_DeSCI}  & \textbf{33.25}, \textbf{0.952} & \underline{28.72}, 0.925 & \textbf{38.76}, 0.969 & \textbf{43.22}, \textbf{0.993} & 25.33, 0.860 & 27.04, 0.909  & 32.72, 0.935 & 6180     \\
			E2E-CNN~\cite{Qiao2020_APLP}  & 29.02, 0.861 & 23.45, 0.838 & 34.43, 0.958 & 36.77, 0.974 & 26.40, 0.886 & 27.52, 0.882 & 29.26, 0.900 & 0.023 \\
			BIRNAT~\cite{Cheng20ECCV_Birnat} & \underline{32.71}, \underline{0.950} & \textbf{29.33}, \textbf{0.942} & \underline{38.70}, \textbf{0.976} & \underline{42.28}, \underline{0.992} & \textbf{27.84}, \underline{0.927} & \textbf{28.99}, \textbf{0.917} & \textbf{33.31}, \textbf{0.951} & 0.16 \\
			PnP-FFDnet~\cite{Yuan2020_CVPR_PnP} & 30.50, 0.926 & 24.18, 0.828 & 32.15, 0.933 & 40.70, 0.989 & 25.42, 0.849 & 25.27, 0.829 & 29.70, 0.892 & 3.0\\
			Tensor-ADMM-net~\cite{Ma19ICCV} & 30.50, 0.890 & NA &  NA &   NA& 25.42, 0.780 & 25.27, 0.860 &  NA & 2.1\\
			Tensor-FISTA-net~\cite{Han_AAAI2020_FISTA} & 31.41, 0.920 & NA &  NA &  NA & 26.46, 0.890 & 27.46, 0.880 &  NA & 1.7\\
			\midrule
			GAP-net-AE-S9 & 24.20, 0.570 & 21.13, 0.685 & 29.18, 0.886 & 32.21, 0.907 & 24.19, 0.769 & 24.41, 0.744 & 25.89, 0.760 & 0.0036\\
			GAP-net-DnCNN-S9 & 31.09, 0.930 & 27.36, 0.912 & 37.49, 0.972 & 41.52, 0.990 & 27.57, 0.925 & 28.56, 0.906 & 32.27, 0.939 & 0.0081\\
			GAP-net-ResNet-S9 & 31.57, 0.938 & 27.61, 0.918 & 37.70, 0.972 & 41.75, 0.991 & 27.68, 0.925 & 28.74, 0.910 & 32.51, 0.942 & 0.0040\\
			ADMM-net-Unet-S9 & 31.87, 0.941 & 27.88, 0.923 & 37.75, 0.973 & 41.41, 0.991 & 27.58, 0.923 & 28.70, 0.910 & 32.53, 0.943 & 0.0058\\
			GAP-net-Unet-S9 & 31.76, 0.941 & 27.87, 0.924 & 37.89, 0.974 & 41.43, 0.991 & 27.53, 0.926 & 28.57, 0.909 & 32.51, 0.944 & 0.0052\\
			\hline
			GAP-net-Unet-S12 & 32.09, 0.944 & 28.19, \underline{0.929} & 38.12, \underline{0.975} & 42.02, \underline{0.992} & \underline{27.83}, \textbf{0.931} & \underline{28.88}, \underline{0.914} & \underline{32.86}, \underline{0.947} & 0.0072\\
			\bottomrule
		\end{tabular}
	}
\end{table}

\begin{figure}
	\includegraphics[width=1\linewidth]{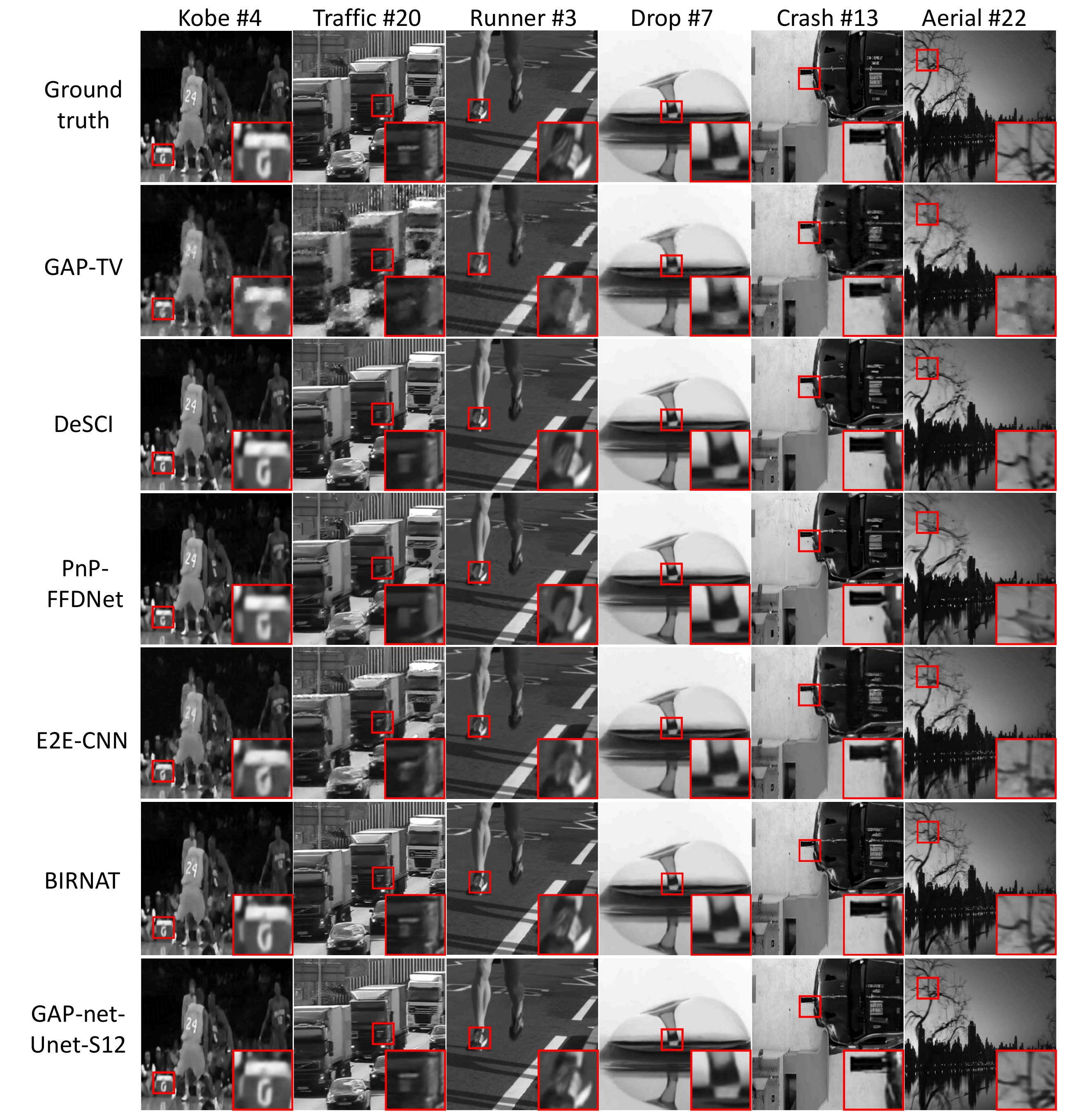}
	\caption{\small Video SCI: reconstructed frames by six algorithms in six synthetic benchmark data.}
	\label{fig:video_result}
\end{figure}

Table~\ref{video_table} summarizes the PSNR and SSIM~\cite{wang2004image} results of different algorithms. For GAP-net, we use $K=9$ stages and    report the results   for different denoisers (14-layer AEs, 17-layer DnCNN, 15-layer U-net and 10-layer ResNet) and compare them with the results of an ADMM-net with 9 stages.
It can be seen that, except for the AE-based GAP-net, the proposed GAP-net with 9 stages, especially   using U-net or ResNet, outperforms GAP-TV, E2E-CNN and PnP methods.

Since the skip-connection in U-net can be recognized as residual learning, this indicates that residual learning is necessary for the unfolding denoiser (compared with the original AEs). For the AE-based case, we believe that its relative poor performance is due to the known challenge in training high-performance AEs, \ie, AEs with low distortion. This leads for instance to having large values of $\delta_k$ in Theorem \ref{The:GAP_SCI_pro_2} and a resulting large final distortion. Tensor-ADMM-net and Tensor-FISTA-net only reported results for 3 datasets in \cite{Ma19ICCV} and \cite{Han_AAAI2020_FISTA}, and all of them are worse than GAP-net's results.%

Table~\ref{video_table} also shows that GAP-net is much faster than other algorithms. Specifically, it only needs 4ms (using ResNet) to provide good results, which along with the SCI cameras, can perform real-time capture and reconstruction of up to 250 measurements per second.
We further trained a deeper GAP-net with 12 stages U-net, and its result outperforms the state-of-the-art optimization algorithm DeSCI on average. 
In addition, ADMM-net and GAP-net with the same network structure provide similar results, but GAP-net spends ($0.6$ms) less time and thus is more efficient. {Note that though the most recent BIRNAT provides 0.5dB higher PSNR than ours, the running time is more than 20 times longer than our GAP-net. This will limit the running speed of the end-to-end SCI systems in real applications.}

\begin{figure}
	\centering
	\includegraphics[width=.6\linewidth]{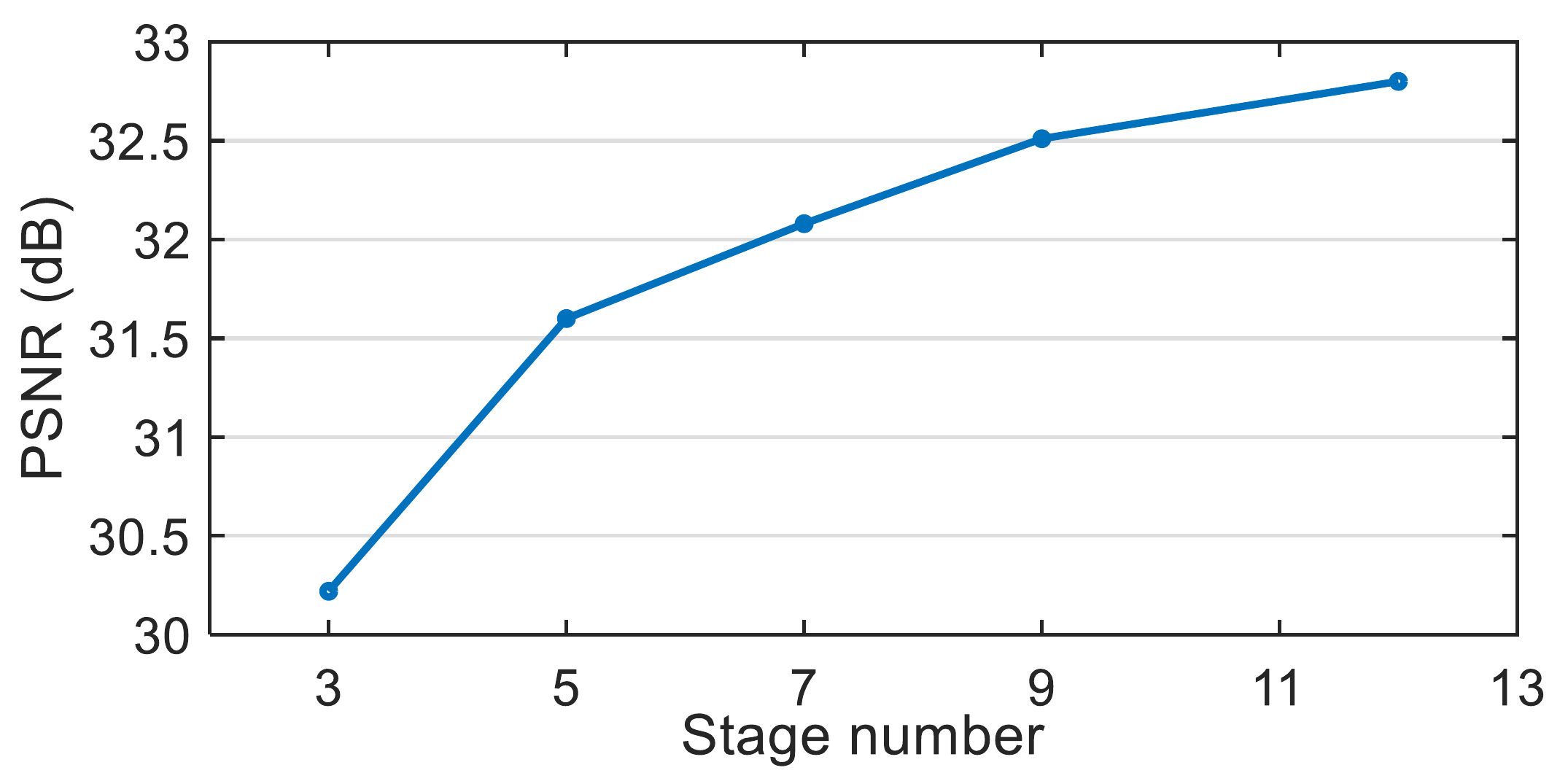}
	\caption{\small Impact of the stage number on the reconstruction accuracy of video SCI.}
	\label{fig:stage_number}
\end{figure}

Fig.~\ref{fig:video_result} plots selected reconstructed frames of the six datasets using different algorithms. We  observe that though DeSCI shows smoother results, our proposed GAP-net provide more accurate details in some cases. The difference between our GAP-net and BIRNAT is marginal and difficult to notice.

To verify our theoretical derivation, Fig.~\ref{fig:stage_number} plots how average PSNR improves as $K$ increases in the U-net-based GAP-net, which is consistent with our theoretical results. We also notice that a 3-stage GAP-net can provide better results than most optimization-based algorithms except 
for DeSCI. 


\subsubsection{Real data} 
We apply the proposed GAP-net (9 stages with a 15-layer U-net) to real data captured by SCI cameras~\cite{Patrick13OE,Qiao2020_APLP}. We compare our results with other five algorithms (GAP-TV, DeSCI, PnP-FFDnet, E2E-CNN and BIRNAT). 
Fig.~\ref{fig:chop} and Fig.~\ref{fig:waterball} demonstrate the reconstructed results of \texttt{Chopper Wheel} of size $256 \times 256 \times 14$ as well as \texttt{Water Balloon} and \texttt{Dominoes} of size  $512 \times 512 \times 10$, respectively.
It can be observed that DeSCI and PnP-FFDnet show smoother results but with a few artifacts. 
The results of E2E-CNN are not clean on the motion part of scenes.

The results of GAP-net show sharper edges of motion objects and extremely clean static details and background without artifacts, which is comparable or even better than the results of BIRNAT. 
Specifically, only GAP-net can recover complete and sharp edges of the letter "D" on \texttt{Chopper Wheel} (zoomed region in Fig.~\ref{fig:chop}). In addition, GAP-net can produce cleaner background and object details (zoomed regions in Fig.~\ref{fig:waterball}).

\begin{figure}[htbp!]
	\centering
	\includegraphics[width=1\linewidth]{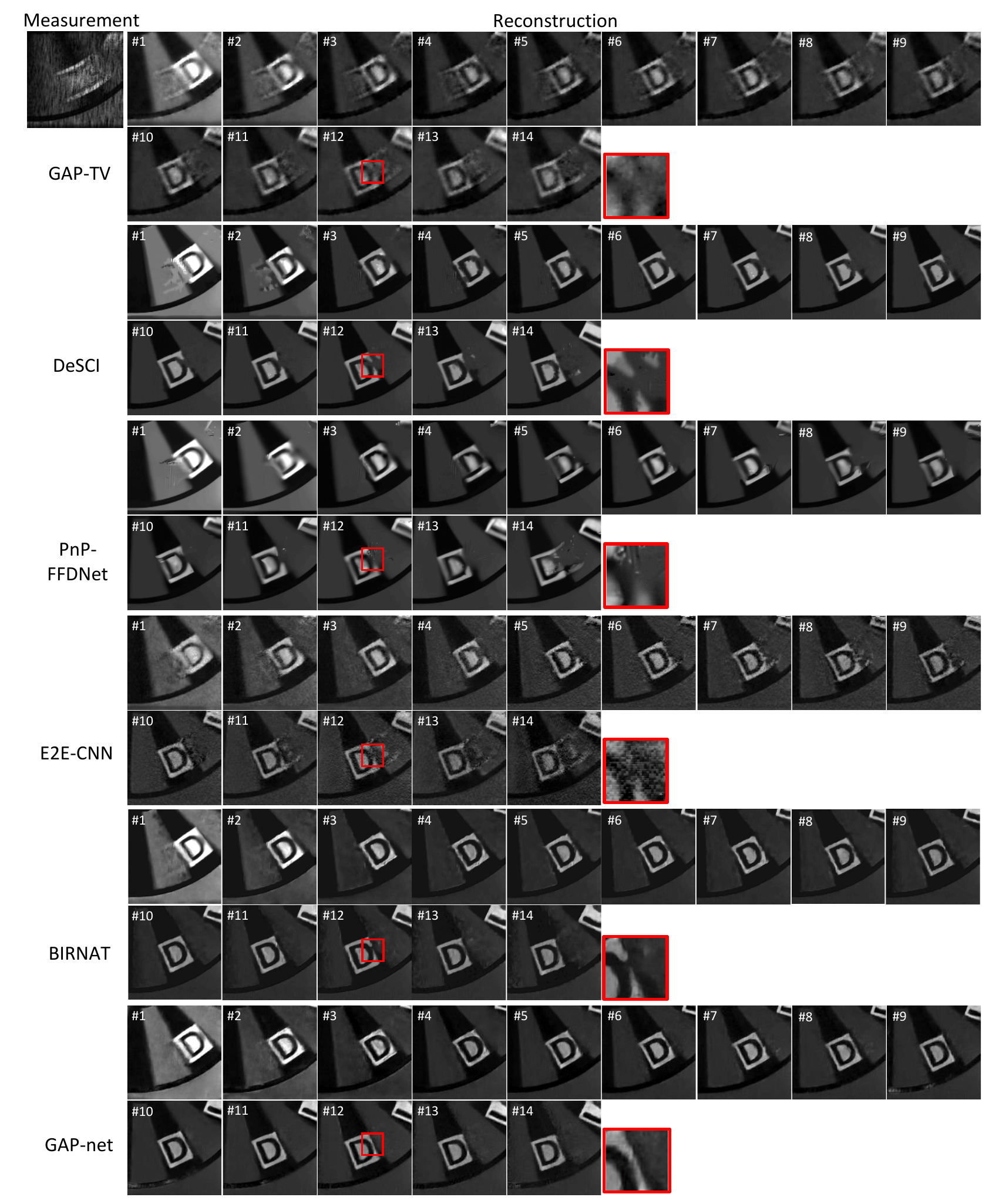}
	\caption{Video SCI: reconstructed results of different algorithms on \texttt{Chop Wheel} [real data].}
	\label{fig:chop}
\end{figure}


\begin{figure}[htbp!]
	\centering
	\includegraphics[width=1\linewidth]{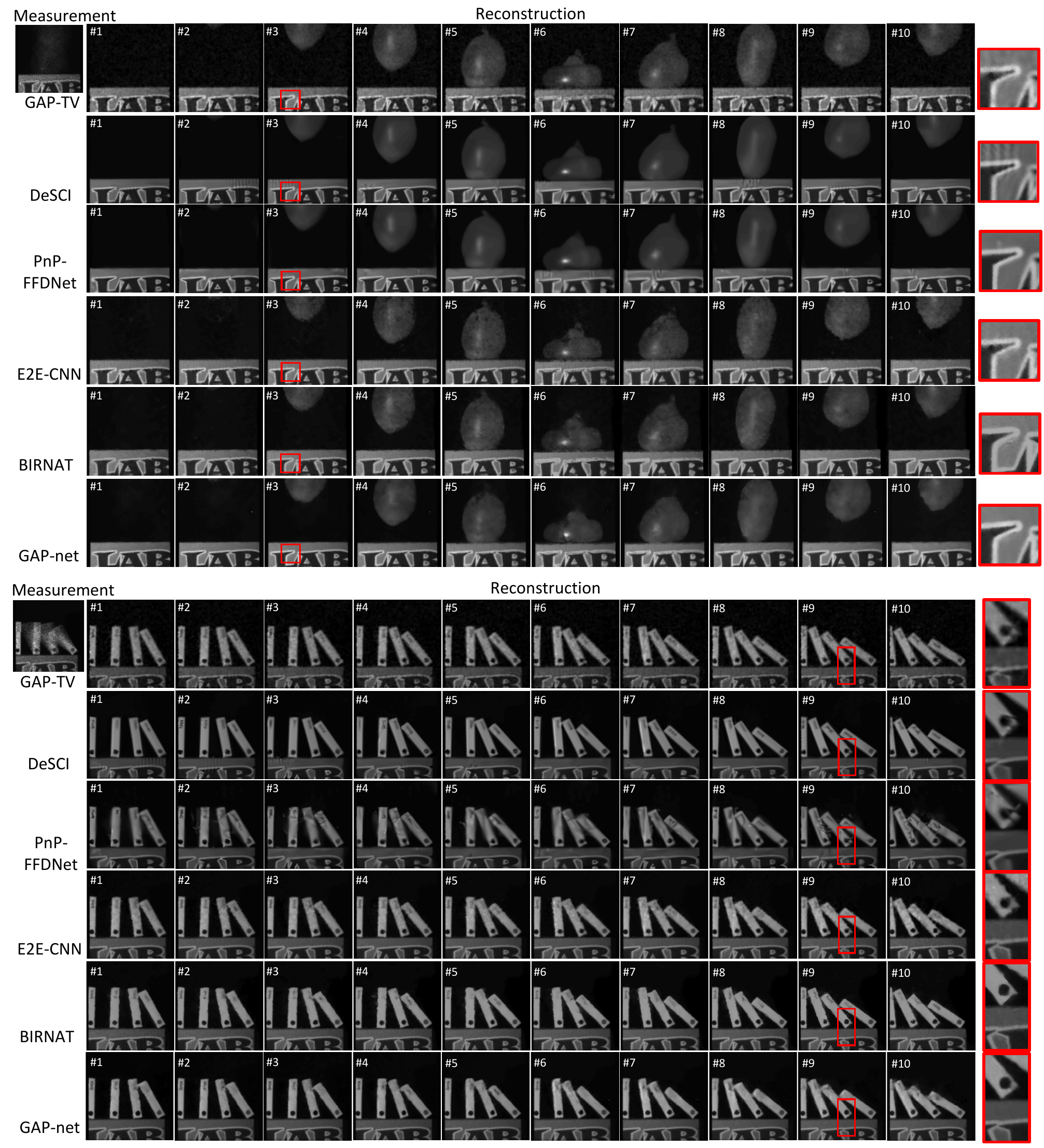}
	\caption{Video SCI: reconstructed results of different algorithms on \texttt{Water Balloon} (top) and \texttt{Dominoes} (bottom) [real data].}
	\label{fig:waterball}
\end{figure}


\subsection{Spectral SCI}
For spectral SCI, following~\cite{Wagadarikar08CASSI} we use the CASSI hardware system we have built in~\cite{Meng20ECCV_TSAnet},  which can produce spectral images with 28 spectral channels ranging from 450 to 650nm with hardware details shown in~\cite{Meng20ECCV_TSAnet}. We compare the proposed GAP-net with iteration algorithms: TwIST~\cite{Bioucas-Dias2007TwIST}, GAP-TV~\cite{Yuan16ICIP_GAP} \& DeSCI~\cite{Liu19_PAMI_DeSCI}, E2E-CNN: $\lambda$-net~\cite{Miao19ICCV}, TSA-net~\cite{Meng20ECCV_TSAnet} and an unfolding method: HSSP~\cite{Wang19_CVPR_HSSP} on both synthetic and real data.

\begin{table}
	\caption{\small Average PSNR in dB, SSIM and run time per measurement by seven algorithms on 10 scenes for spectral SCI.}
	\label{spectral_table}
	\centering
	\resizebox{1\textwidth}{!}
	{
		\begin{tabular}{c | c c c c c c c }
			\toprule
			\cmidrule(r){1-2}
			Methods     & TwIST     & GAP-TV & DeSCI & HSSP & $\lambda$-net & TSA-net & GAP-net  \\
			\midrule
			Scene1 & 24.81, 0.730 & 25.13, 0.724 & 27.15, 0.794 & 31.07, 0.852 & 30.82, 0.880 & 31.26, 0.887 & {\bf 33.03}, {\bf 0.921} \\
			Scene2 & 19.99, 0.632 & 20.67, 0.630 & 22.26, 0.694 & 26.30, 0.798 & 26.30, 0.846 & 26.88, 0.855 & {\bf 29.52}, {\bf 0.903} \\
			Scene3 & 21.14, 0.764 & 23.19, 0.757 & 26.56, 0.877 & 29.00, 0.875 & 29.42, 0.916 & 30.03, 0.921 & {\bf 33.04}, {\bf 0.940}  \\
			Scene4 & 30.30, 0.874 & 35.13, 0.870 & 39.00, 0.965 & 38.24, 0.926 & 36.27, 0.962 & 39.90, 0.964 & {\bf 41.59}, {\bf 0.972}  \\
			Scene5 & 21.68, 0.688 & 22.31, 0.674 & 24.80, 0.778 & 27.98, 0.827 & 27.84, 0.866 & 28.89, 0.878 & {\bf 30.95, 0.924}  \\
			Scene6 & 22.16, 0.660 & 22.90, 0.635 & 23.55, 0.753 & 29.16, 0.823 & 30.69, 0.886 & 31.30, 0.895 & {\bf 32.88, 0.927} \\
			Scene7 & 17.71, 0.694 & 17.98, 0.670 & 20.03, 0.772 & 24.11, 0.851 & 24.20, 0.875 & 25.16, 0.887 & {\bf 27.60, 0.921} \\
			Scene8 & 22.39, 0.682 & 23.00, 0.624 & 20.29, 0.740 & 27.94, 0.831 & 28.86, 0.880 & 29.69, 0.887 & {\bf 30.17, 0.904}  \\
			Scene9 & 21.43, 0.729 & 23.36, 0.717 & 23.98, 0.818 & 29.14, 0.822 & 29.32, 0.902 & 30.03, 0.903 & {\bf 32.74, 0.927} \\
			Scene10 & 22.87, 0.595 & 23.70, 0.551 & 25.94, 0.666 & 26.44, 0.740 & 27.66, 0.843 & 28.32, 0.848 & {\bf 29.73, 0.901} \\
			\midrule
			Average & 22.44, 0.703 & 23.73, 0.683 & 25.86, 0.785  & 28.93, 0.834 & 29.25, 0.886 & 30.15, 0.893 & {\bf 32.13, 0.924}  \\
			Time (s) & 22.2 & 14.5 & 8500 & 0.011 & 0.013 & 0.03 & 0.016 \\
			\bottomrule
		\end{tabular}
	}
\end{table}

\begin{figure}
	\includegraphics[width=1\linewidth]{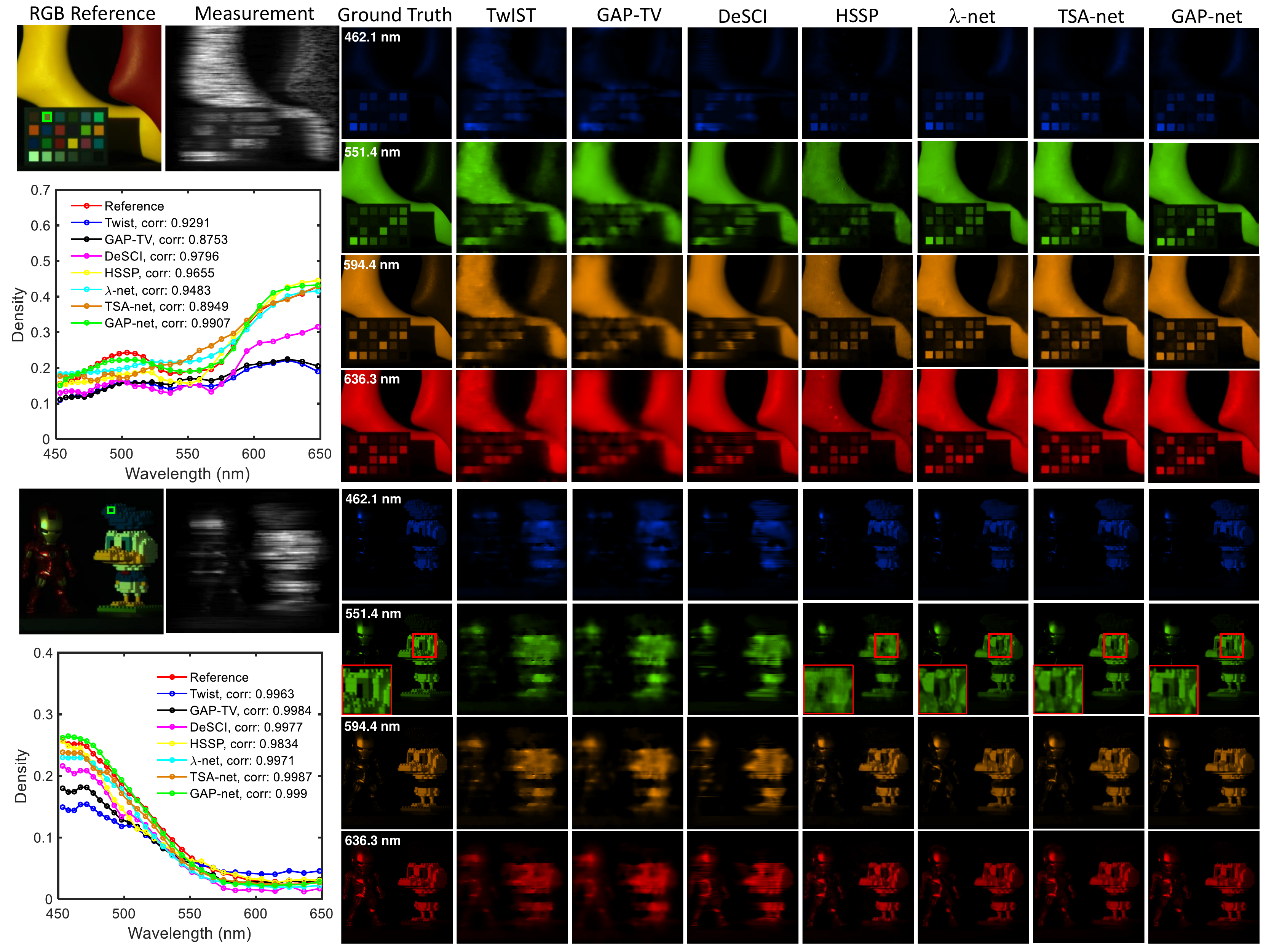}
	\caption{\small Spectral SCI: reconstructed images and spectra of scene 3 and scene 8 in Fig.~\ref{fig:cassi_scene} with seven algorithms in synthetic data.
	}
	\label{fig:spectral_simu1}
\end{figure}
\subsubsection{Synthetic data} 
We cropped a $256 \times 256$ region of the real captured mask as the mask for simulation. The measurements are generated by this mask and hyperspectral datasets mentioned before, and the shift of two adjacent channels is two pixels. We trained the GAP-net with 9 stages using U-net as the denoiser for reconstruction. Though we believe more stages of GAP-net will provide better results, due to large-scale of the data (28 spectral channels) and the GPU memory, we only train a 9-stage network to demonstrate the performance of our proposed GAP-net here. Importantly, this already leads to state-of-the-art results on spectral SCI.

Table~\ref{spectral_table} lists the PSNR and SSIM of the reconstructed results of 10 testing scenes shown in Fig.~\ref{fig:cassi_scene}  by seven algorithms  including the most recent TSA-net~\cite{Meng20ECCV_TSAnet}.
GAP-net achieves a significant improvement in reconstruction, \ie, the average PSNR is 2dB higher than TSA-net which is the best among other algorithms. 

The reconstructed results of two scenes, including images with four channels and recovered spectra of the selected regions, are shown in Fig.~\ref{fig:spectral_simu1}. It can be seen that the results of iterative optimization algorithms suffer from blurry artifacts resulted from the coded measurements, which is due to the disperser in the hardware system. Among deep learning-based methods, the proposed GAP-net can reconstruct sharper spatial details and more accurate spectra than other methods. 

Fig.~\ref{fig:spectral_simu2} compares the reconstructed results of seven algorithms on 4 scenes, and the PSNR and SSIM values are provided for each result. To visualize the recovered color, we convert the spectral images to synthetic-RGB (sRGB) via the CIE (International Commission on Illumination) color matching function~\cite{smith1931cie}. It can be observed that GAP-net outperforms other algorithms in both spatial details and spectral accuracy. Clear details and sharp edges can be recovered.
Please refer to the zoomed regions of each scene.

\begin{figure}
	\includegraphics[width=1\linewidth]{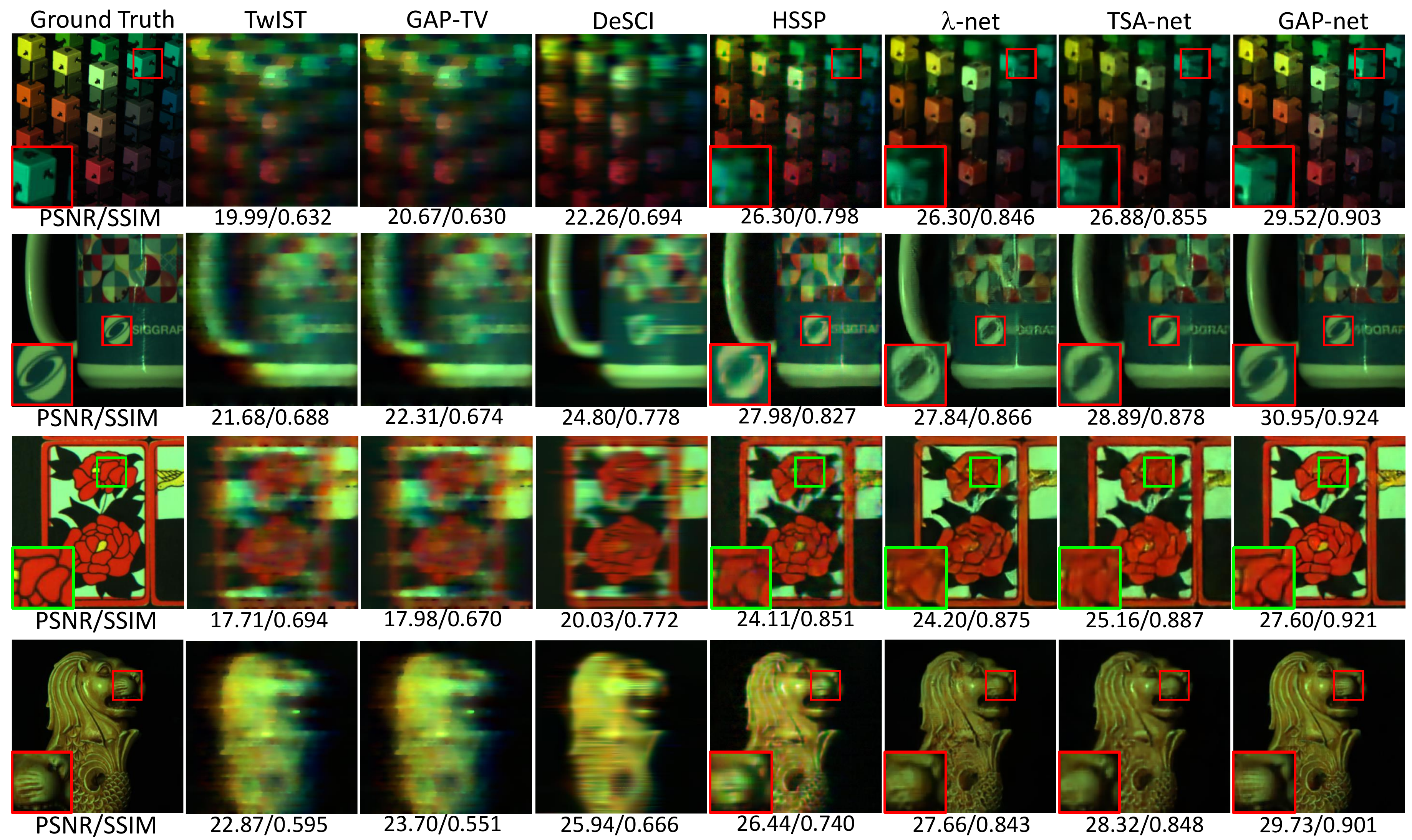}
	\caption{\small Spectral SCI: reconstructed images (sRGB) with 7 algorithms in synthetic data.
	}
	\label{fig:spectral_simu2}
\end{figure}

\subsubsection{Flexibility of GAP-net to Mask Modulation}

The proposed GAP-net treats the CNN in each stage as a denoising network, and thus  GAP-net is flexible with respect to the signal modulation, \ie, the masks used in the SCI system.
To verify this point, we conduct simulations of spectral SCI by training GAP-net using one mask and testing it on the other three masks.
Table~\ref{spectral_table2} lists the average testing results of the ten scenes using three new masks that are cropped from the real captured mask with different regions from the mask used in training. We  observe that the image quality degrades less than 1 dB in PSNR when using new masks for testing, and the results are still better than other existing algorithms. Therefore, due to the flexibility of masks, a well trained GAP-net on small-scale data can be used for large-scale reconstruction using patch-based testing, and the trained network can also be applied to different systems.

\begin{table}
	\caption{\small Testing results (PSNR in dB and SSIM) of spectral SCI using different masks}
	\label{spectral_table2}
	\centering
	\resizebox{0.45\textwidth}{!}
	{
		\begin{tabular}{c | c }
			\toprule
			\cmidrule(r){1-2}
			Mask for testing & PSNR/SSIM    \\
			\midrule
			Mask used in training & 32.13, 0.924 \\
			New mask 1 & 31.36, 0.901 \\
			New mask 2 & 31.05, 0.895 \\
			New mask 3 & 31.60, 0.902 \\
			\bottomrule
		\end{tabular}
	}
\end{table}

\begin{figure}[htbp!]
	\centering
	\includegraphics[width=1\linewidth]{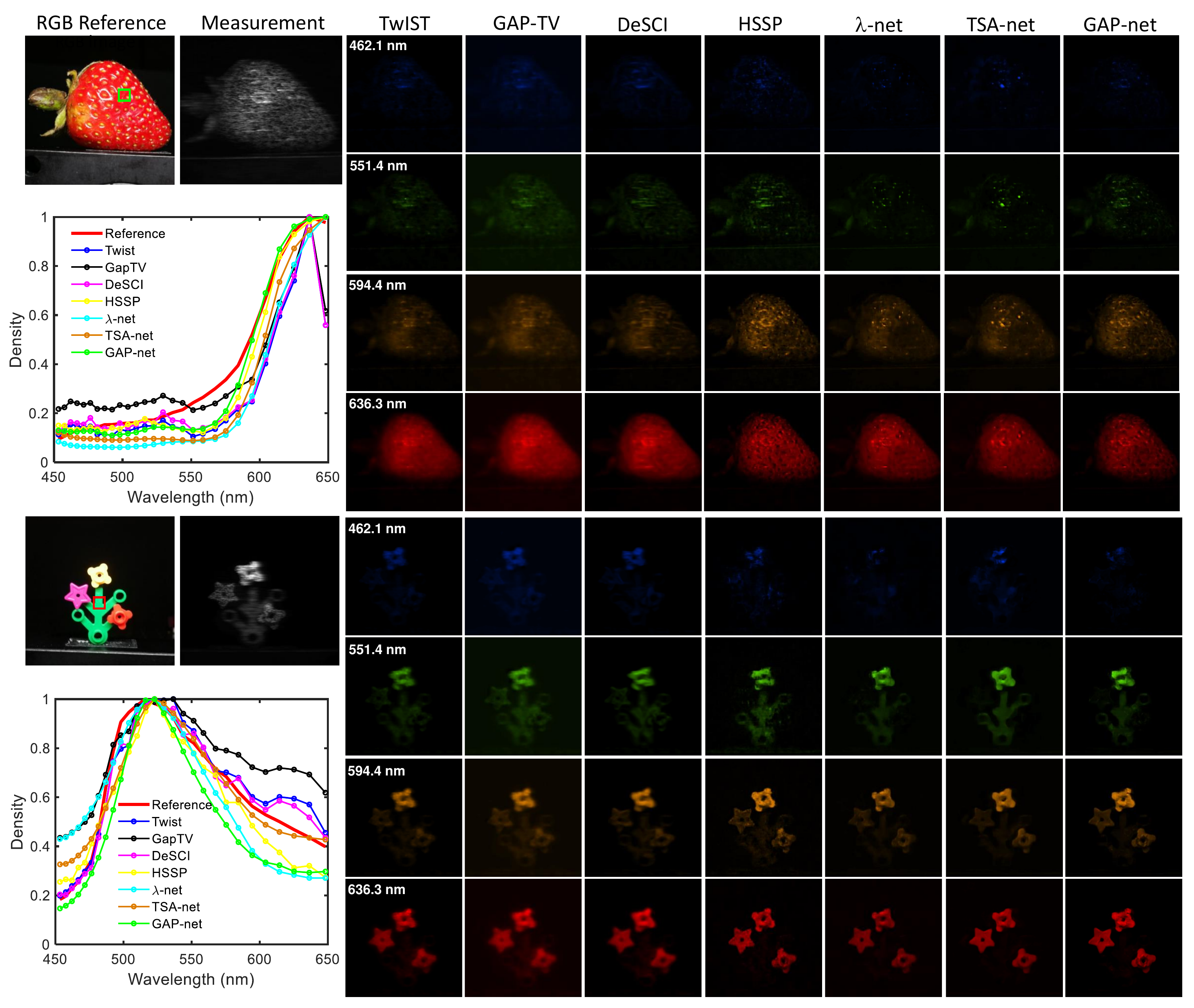}
	\caption{ Spectral SCI: reconstructed images with 7 algorithms on \texttt{Strawberry} (top) and \texttt{Lego plant} (bottom) [real data].}
	\label{fig:spectral_real1}
\end{figure}


\begin{figure}[htbp!]
	\centering
	\includegraphics[width=1\linewidth]{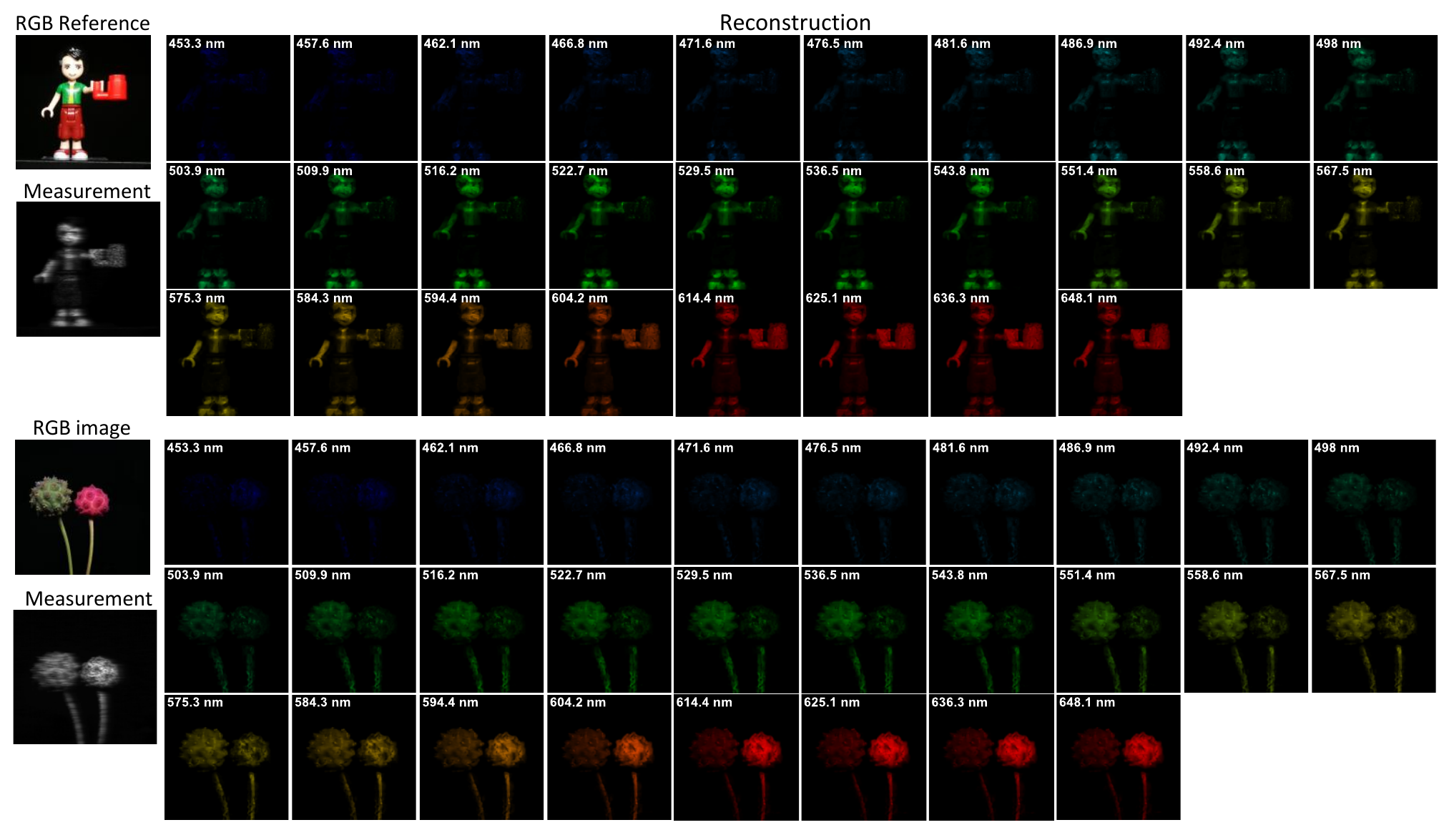}
	\caption{ Spectral SCI: reconstructed images with 28 channels by GAP-Net on \texttt{lego man} (top) and \texttt{Real plant} (bottom) [real data].}
	\label{fig:spectral_real2}
\end{figure}

\subsubsection{Real data} 
For real data testing, we use the {pre-trained GAP-net on simulation} to reconstruct the scenes with larger scale by patch-based manner. We reconstructed four scenes with a size of $550 \times 550 \times 28$. 

Fig.~\ref{fig:spectral_real1} shows the results of \texttt{Strawberry} and \texttt{Lego plant} with four spectral channels and recovered spectra of the selected regions. We also compare our method with other six algorithms. We can observe that the results of deep learning-based methods are much better than the results of iterative algorithms. Our GAP-net can provide more details of the objects and higher spectral contrast than other deep learning based methods. Fig.~\ref{fig:spectral_real2} shows the reconstructed images of \texttt{Lego man} and \texttt{Real plant} with 28 spectral channels. The results provide both small-scale fine details and large-scale sharp edges.








%
%

\section{Proof of the main result}\label{Appex:1}
Recall that in Theorem~\ref{The:GAP_SCI_pro_2}, we assume that the non-zero entries of the sensing matrix $\Hmat$ are drawn i.i.d.~$\Nc(0,1)$. In this setting, to account for the scaling of matrix $\Hmat$, we update the first step of GAP-net, \ie, $\xv^{(k+1)} =  \vv^{(k)} + \Hmat\ts (\Hmat \Hmat\ts)\inv (\yv - \Hmat \vv^{(k)})$ to
\[
\xv^{(k+1)} =  \vv^{(k)} + B\Hmat\ts (\Hmat \Hmat\ts)\inv (\yv - \Hmat \vv^{(k)}).
\]
The only difference here is the factor $B$, which takes care of the scaling of the variables. 

In the following,  we first derive a more general result on the convergence performance of GAP-net. Then, Theorem~\ref{The:GAP_SCI_pro_2}  follows as a corollary of this result. 
\begin{theorem} \label{The:GAP_SCI_pro}
	Consider the sensing model of SCI. Assume that generative function $g_k: {\cal U}^{\eta_k} \rightarrow {\cal X}^{nB}$ covers set ${\cal Q}$ with distortion $\delta_k$. Further assume that $\delta_K\le \dots \le \delta_1$ and $\eta_1\le \dots \le \eta_K$. 
	Let $b_k\in\mathds{N}^+$, $k=0,\ldots,K$, denote auxiliary variables. Define 
	\begin{equation}
	\gamma_k\stackrel{\rm def}{=}{L_k 2^{-b_k}\sqrt{\eta_k}\over \delta_k\sqrt{nB}}
	\label{Eq:gammak}
	\end{equation}	
	and \begin{equation}
	\alpha_k\stackrel{\rm def}{=} 2(1+2{B})(\gamma_k(1+{1\over 1-\gamma_k}) )^{1\over 2}. \label{Eq:alphak}
	\end{equation}
	Then, if ${1 \over \sqrt{nB}}\|\vv^{(k)}-\vvt_k\|\geq \delta_k$ and ${1 \over \sqrt{nB}}\|\vv^{(k-1)}-\vvt_k\|\geq \delta_k$, we have
	\begin{align}
	{1\over \sqrt{nB}}\|\vvt_k- \vv^{(k)}\|_2
	\leq &{2\over \sqrt{nB}}(\lambda + \alpha_k) \|\vvt_{k-1}- \vv^{(k-1)}\|_2 \nonumber\\
	& \quad+(\lambda + \alpha_k)(\delta_k+\delta_{k-1})+2{B}\delta_k.
	\end{align}	
	with a probability larger than
	\[
	1-\sum_{k=0}^{K-1} |\Uc^{(k)}_Q|^{2\eta_k}\exp\Big(-{2\lambda^2 n\delta_k^4(1-\gamma_k)^4\over  4 {B}^2\rho^4 }\Big),
	\]
	where $\Uc^{(k)}_Q=\{[u]_{b_k}: u\in\Uc\}$ and $\lambda\in (0, 0.5-\alpha_k)$.
\end{theorem}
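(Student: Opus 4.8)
The plan is to establish a one‑stage recursion and then union‑bound over the stages. Fix a stage $k$ and view it as a projection step $\xv^{(k)}=\vv^{(k-1)}+B\Hmat\ts(\Hmat\Hmat\ts)\inv(\yv-\Hmat\vv^{(k-1)})$ followed by a generative‑denoising step $\vv^{(k)}=g_k(\fv^{(k)})$ with $\fv^{(k)}=\argmin_\fv\|\xv^{(k)}-g_k(\fv)\|_2$. First I would dispose of the denoising step by optimality alone: since $\tilde\fv^{(k)}$ is feasible in the minimization defining $\fv^{(k)}$, $\|\xv^{(k)}-\vv^{(k)}\|_2\le\|\xv^{(k)}-\vvt_k\|_2$, and the triangle inequality gives $\|\vv^{(k)}-\vvt_k\|_2\le 2\|\xv^{(k)}-\vvt_k\|_2$. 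This is the source of the leading factor $2$ in front of $(\lambda+\alpha_k)$, and it reduces the whole problem to bounding $\|\xv^{(k)}-\vvt_k\|_2$.

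Next I would expand the projection step about $\vvt_k$. With $\yv=\Hmat\xv^*$ (measurement noise, if present, is carried along as an extra additive term) and $\Pmat\stackrel{\rm def}{=}\Hmat\ts(\Hmat\Hmat\ts)\inv\Hmat$, adding and subtracting $B\Pmat\vvt_k$ gives
\begin{equation*}
\xv^{(k)}-\vvt_k=(\Imat-B\Pmat)(\vv^{(k-1)}-\vvt_k)+B\Pmat(\xv^*-\vvt_k).
\end{equation*}
The last term is bounded bluntly by the operator norm of $\Pmat$ and the covering property of $g_k$: $\|B\Pmat(\xv^*-\vvt_k)\|_2\le B\|\xv^*-\vvt_k\|_2\le B\sqrt{nB}\,\delta_k$, which after dividing by $\sqrt{nB}$ and accounting for the factor $2$ gives the $2B\delta_k$ term. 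In the first term I would split $\vv^{(k-1)}-\vvt_k=(\vv^{(k-1)}-\vvt_{k-1})+(\vvt_{k-1}-\vvt_k)$, using $\|\vvt_{k-1}-\vvt_k\|_2\le\|\vvt_{k-1}-\xv^*\|_2+\|\xv^*-\vvt_k\|_2\le\sqrt{nB}(\delta_{k-1}+\delta_k)$; once $\Imat-B\Pmat$ is under control this produces, after dividing by $\sqrt{nB}$, the additive $(\lambda+\alpha_k)(\delta_k+\delta_{k-1})$‑type term and the $\tfrac{2}{\sqrt{nB}}(\lambda+\alpha_k)\|\vvt_{k-1}-\vv^{(k-1)}\|_2$ contraction term. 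So everything reduces to a uniform bound $\tfrac{1}{\sqrt{nB}}\|(\Imat-B\Pmat)\wv\|_2\le(\lambda+\alpha_k)\tfrac{1}{\sqrt{nB}}\|\wv\|_2$, holding with high probability simultaneously for all $\wv$ that are differences of two points in the ranges of $g_{k-1}$ and $g_k$.

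Establishing this operator bound is the step I expect to be the real obstacle, and it is where the mask randomness enters. Since $\Hmat=[\Dmat_1,\dots,\Dmat_B]$ with $\Dmat_b$ diagonal and entries i.i.d.\ $\Nc(0,1)$, $\Hmat\Hmat\ts$ is diagonal and $\Imat-B\Pmat$ is block diagonal across the $n$ pixels, the $i$-th block being $\Imat_B-B\,\dv_i\dv_i\ts/\|\dv_i\|_2^2$ with $\dv_i=(D_{1,i},\dots,D_{B,i})\ts$; hence $\tfrac{1}{nB}\|(\Imat-B\Pmat)\wv\|_2^2$ is a sum of $n$ independent terms, each a bounded function of $\dv_i$ (boundedness via $\langle\dv_i,\wv_i\rangle^2\le\|\dv_i\|_2^2\|\wv_i\|_2^2$ and the compact alphabet $\|\wv\|_\infty\le\rho$). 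For a \emph{fixed} $\wv$, McDiarmid's bounded‑differences inequality over the $n$ mask columns then gives a tail of order $\exp\!\bigl(-2\lambda^2 n\delta_k^4(1-\gamma_k)^4/(4B^2\rho^4)\bigr)$. To upgrade to a uniform statement over the nonlinear images of the $g_k$'s I would quantize each latent coordinate to $b_k$ bits; netting $g_{k-1}$ and $g_k$ simultaneously (possible because $\eta_{k-1}\le\eta_k$) and union‑bounding over all pairs of net points contributes the $|\Uc^{(k)}_Q|^{2\eta_k}$ prefactor. The quantization error $\|g_k(\fv)-g_k([\fv]_{b_k})\|_2\le L_k2^{-b_k}\sqrt{\eta_k}=\gamma_k\sqrt{nB}\,\delta_k$ is charged against the estimated norm precisely through the hypotheses $\tfrac{1}{\sqrt{nB}}\|\vv^{(k)}-\vvt_k\|\ge\delta_k$ and $\tfrac{1}{\sqrt{nB}}\|\vv^{(k-1)}-\vvt_k\|\ge\delta_k$, so that it contributes at most a $\gamma_k$-fraction of that norm; propagating these $\gamma_k$-factors through the squared‑norm expansion — the $\|\Pmat\cdot\|$ cross terms and a geometric‑series step yielding the $1/(1-\gamma_k)$ — is what delivers $\alpha_k=2(1+2B)\bigl(\gamma_k(1+\tfrac{1}{1-\gamma_k})\bigr)^{1/2}$. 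The fragile pieces are keeping the bounded‑difference constants tight enough to land the exponent with the stated $B$- and $\rho$-dependence, and charging the $\gamma_k$-level errors against the right quantities so the $(1+2B)$ and $(1-\gamma_k)^{-1}$ factors come out exactly.

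Finally I would union‑bound the per‑stage failure events over $k=0,\dots,K-1$, which gives the stated probability, and then observe that Theorem~\ref{The:GAP_SCI_pro_2} is the corollary obtained by taking $b_k$ of order $\log(1/\delta_k)$ — concretely so that $\gamma_k=L_k\delta_k^{\zeta}\sqrt{\eta_k/(nB)}$ — which collapses the cardinality $|\Uc^{(k)}_Q|^{2\eta_k}$ into the factor $\exp\!\bigl(2\ln 2\,((1-\zeta)\log\tfrac{1}{\delta_k}+1)\eta\bigr)$ appearing there.
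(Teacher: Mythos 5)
There is a genuine gap at the point where you reduce everything to the uniform bound $\tfrac{1}{\sqrt{nB}}\|(\Imat-B\Pmat)\wv\|_2\le(\lambda+\alpha_k)\tfrac{1}{\sqrt{nB}}\|\wv\|_2$ over the model difference set, where $\Pmat=\Hmat\ts(\Hmat\Hmat\ts)\inv\Hmat$. That bound is false for $B\ge 2$, and no concentration or netting can rescue it, because the \emph{expected value} of the left-hand side is already too large. Since $\Pmat$ is an orthogonal projection, $\|(\Imat-B\Pmat)\wv\|^2=\|\wv\|^2+(B^2-2B)\|\Pmat\wv\|^2$, and the same symmetry computation the paper uses to show $\Eox{D_{bi}^2/R_i}=1/B$ gives $\Eox{\|\Pmat\wv\|^2}=\|\wv\|^2/B$ for every fixed $\wv$. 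Hence $\Eox{\|(\Imat-B\Pmat)\wv\|^2}=(B-1)\|\wv\|^2$: the map $\Imat-B\Pmat$ is typically an \emph{expansion} by a factor of about $\sqrt{B-1}$, not a contraction by $\lambda+\alpha_k<1/2$; your own McDiarmid step would show the quantity concentrates \emph{around} $(B-1)\|\wv\|^2$, the opposite of what you need. The step that loses the game is the early triangle inequality $\|\vv^{(k)}-\vvt_k\|\le 2\|\xv^{(k)}-\vvt_k\|$, which discards the direction of the post-denoising error.

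The paper keeps that direction. From the optimality of $\fv^{(k)}$ it derives $\|\vvt_k-\vv^{(k)}\|^2\le 2\langle\vvt_k-\xv^{(k)},\vvt_k-\vv^{(k)}\rangle$, so that after normalization the quantity to control is the \emph{bilinear} form $\langle\ev,\ev'\rangle-B\langle\Rmat\inv\Hmat\ev,\Hmat\ev'\rangle=\langle\ev,(\Imat-B\Pmat)\ev'\rangle$ evaluated at two different unit directions, $\ev\propto\vvt_k-\vv^{(k-1)}$ and $\ev'\propto\vvt_k-\vv^{(k)}$, each ranging (after quantization) over a finite net built from the generative models. This form decomposes pixel-wise into a sum of $n$ independent, bounded, \emph{zero-mean} random variables, so Hoeffding plus a union bound over pairs of net points gives the stated exponent and the $|\Uc^{(k)}_Q|^{2\eta_k}$ prefactor; smallness of the restricted bilinear form is a much weaker, and true, statement than smallness of the restricted operator norm. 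Your remaining ingredients --- the $2B\delta_k$ term via $\sigma_{\max}(\Hmat\ts\Rmat\inv\Hmat)\le 1$, the triangle inequality converting $\|\vvt_k-\vv^{(k-1)}\|$ into $\|\vvt_{k-1}-\vv^{(k-1)}\|+\sqrt{nB}(\delta_k+\delta_{k-1})$, the quantization accounting that produces $\gamma_k$ and $\alpha_k$, and the final union bound over stages --- all match the paper and would survive once the central estimate is replaced by the bilinear one.
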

\begin{proof}[Proof of Theorem \ref{The:GAP_SCI_pro}] 
	By using the definition of distortion based generative model, the update equations of GAP-net are 
	\begin{eqnarray}
	\xv^{(k+1)} &=&  \vv^{(k)} + B\Hmat\ts (\Hmat \Hmat\ts)\inv (\yv - \Hmat \vv^{(k)}), \label{Eq:Xk_gau}\\
	{\vv}^{(k)} &=& g_k(\fv^{(k)}), ~~{\text{with}}~~   \fv^{(k)} = \argmin_{\fv \in {\mathbb R}^{\eta_k}}\|\xv^{(k)} -g_k(\fv)\|_2.
	\end{eqnarray}
Since by assumption, $g_k$ covers $\Qc$ with distortion $\delta_k$, we have 
	${1\over \sqrt{nB}}\|\vv^{(k)}-\vvt_k\| \leq \delta_k.$
	Moreover, since $\fv^{(k)}=\argmin_{\fv\in\mathds{R}^{\eta_k}}\|\xv^{(k)}- g_{k}(\fv)\|$, it follows that $\|\xv^{(k)}- \vv^{(k)}\|\leq \|\xv^{(k)}-\vvt_k\|$. 
	But, $
	    \|\xv^{(k)}- \vv^{(k)}\|^2=\|\xv^{(k)}-\vvt_k+\vvt_k- \vv^{(k)}\|^2 \nonumber\\
	    =\|\xv^{(k)}-\vvt_k\|^2+\|\vvt_k- \vv^{(k)}\|^2+2\langle \xv^{(k)}-\vvt_k,\vvt_k- \vv^{(k)}\rangle.$
	Therefore, $\|\vvt_k- \vv^{(k)}\|^2+2\langle \xv^{(k)}-\vvt_k,\vvt_k- \vv^{(k)}\rangle\leq 0$, or
	\begin{align}
	\|\vvt_k- \vv^{(k)}\|^2\leq 2\langle \vvt_k-\xv^{(k)},\vvt_k- \vv^{(k)}\rangle.
	\end{align}
	Since $\xv^{(k)} =  \vv^{(k-1)} + B\Hmat\ts \Rmat\inv \Hmat(\xv^* - \vv^{(k-1)}) $, we have
	\begin{align*}
	\langle\vvt_k-\xv^{(k)},\vvt_k- \vv^{(k)}\rangle &= \langle \vvt_k - \vv^{(k-1)} - B\Hmat\ts \Rmat\inv \Hmat(\xv^* - \vv^{(k-1)}), \vvt_k- \vv^{(k)}\rangle \nonumber\\
	& = \langle \vvt_k - \vv^{(k-1)}, \vvt_k- \vv^{(k)}\rangle \nonumber\\
	& \quad-B  \langle \Rmat\inv \Hmat(\xv^* - \vv^{(k-1)}),  \Hmat (\vvt_k- \vv^{(k)}) \rangle.
	\end{align*}
	Also 
	\begin{align*}
	&\langle \Rmat\inv \Hmat(\xv^* - \vv^{(k-1)}),  \Hmat (\vvt_k- \vv^{(k)}) \rangle \nonumber\\
	&= \langle \Rmat\inv \Hmat(\vvt_k  - \vv^{(k-1)}),  \Hmat (\vvt_k- \vv^{(k)}) \rangle + \langle \Rmat\inv \Hmat(\xv^* - \vvt_k),  \Hmat (\vvt_k- \vv^{(k)}) \rangle.
	\end{align*}
	Therefore, in summary,
	\begin{align}
	\|\vvt_k- \vv^{(k)}\|^2\;\leq &\; 2\langle \vvt_k - \vv^{(k-1)}, \vvt_k- \vv^{(k)}\rangle \nonumber \\
	&- 2B \langle \Rmat\inv \Hmat(\vvt_k  - \vv^{(k-1)}),  \Hmat (\vvt_k- \vv^{(k)}) \rangle \nonumber \\
	&-2B \langle \Rmat\inv \Hmat(\xv^* - \vvt_k),  \Hmat (\vvt_k- \vv^{(k)}) \rangle \label{Eq:vvtmvvk}
	\end{align}
	Define
	\begin{align}
	\ev = \frac{\vvt_k - \vv^{(k-1)}}{\|\vvt_k - \vv^{(k-1)}\|}, \quad \ev' = \frac{\vvt_k- \vv^{(k)}}{\|\vvt_k- \vv^{(k)}\|}.
	\end{align}
	Using these definitions and applying Cauchy-Schwartz inequality to the last term in \eqref{Eq:vvtmvvk}, we have
	\begin{align}
	\|\vvt_k- \vv^{(k)}\| \leq &2 |\langle\ev, \ev'\rangle -B \langle\Rmat\inv \Hmat \ev, \Hmat \ev'\rangle| \|\vvt_k - \vv^{(k-1)}\| \nonumber\\
	&+  2 B \sigma_{\max} (\Hmat\ts\Rmat\inv \Hmat) \|\xv^* - \vvt_k\|, \label{eq:error-reduction-main-1}
	\end{align}
	where $\sigma_{\max} (~)$ is the maximum eigenvalue of the ensued matrix.
	
	For $k=1,\ldots,K$, let $\uv_Q^{(k)}\triangleq [\uv^{(k)}]_b$ and $\vv_Q^{(k)}\triangleq g_k(\uv_Q^{(k)})$ and define
	\begin{align}
	\ev_{Q}={\vvt_k- \vv_Q^{(k-1)}\over \|\vvt_k- \vv_Q^{(k-1)}\|}, \;\; \ev'_Q={\vvt_k- \vv_Q^{(k)}\over \|\vvt_k- \vv_Q^{(k)}\|}.
	\end{align}
	Note that since $\|\ev'_{Q}\|=\|\ev'\|=1$, it follows that
	\begin{align}
	\|\ev'_{Q}-\ev'\|^2&=2(1-\langle \ev'_{Q}, \ev'\rangle).
	\end{align}
	Let 
	\begin{equation}
	\mvec{\zeta}^{(k)}\triangleq \vv^{(k)}-\vv_Q^{(k)}.
	\end{equation}
	By assumption, $g_k$ is $L_k$ Lipschitz. Therefore, $\|\mvec{\zeta}^{(k)}\|\leq L_k\| [\uv^{(k)}]_b-\uv_Q^{(k)}\|\leq L_k 2^{-b}\sqrt{{\eta_k}}$.
	Note that using this definition, 
	\begin{align}
	1-\langle \ev'_{Q}, \ev'\rangle &=1-\langle \frac{\vvt_k- \vv^{(k)}}{\|\vvt_k- \vv^{(k)}\|},{\vvt_k- \vv_Q^{(k)}\over \|\vvt_k- \vv_Q^{(k)}\|}\rangle 
	\nonumber\\
	&=1-\langle \frac{\vvt_k- \vv^{(k)}}{\|\vvt_k- \vv^{(k)}\|},{\vvt_k- \vv^{(k)} +\mvec{\zeta}^{(k)}\over \|\vvt_k- \vv_Q^{(k)}\|}\rangle \nonumber\\
	&=1-{ \|\vvt_k- \vv^{(k)}\| \over \|\vvt_k- \vv_Q^{(k)}\|}-{\langle\mvec{\zeta}^{(k)}, \vvt_k- \vv_Q^{(k)}\rangle\over  \|\vvt_k- \vv_Q^{(k)}\| \|\vvt_k- \vv^{(k)}\| }.
	\end{align}
	Therefore, using Cauchy-Schwartz  and triangle inequalities, it follows that 
	\begin{align}
	1-\langle \ev'_{Q}, \ev'\rangle
	&\leq \Big|1- { \|\vvt_k- \vv^{(k)}\| \over \|\vvt_k- \vv_Q^{(k)}\|}\Big|+ {\|\mvec{\zeta}^{(k)}\|\over \|\vvt_k- \vv_Q^{(k)}\|}.
	\end{align}
	But  $\|\vvt_k- \vv_Q^{(k)}\|-\|\mvec{\zeta}^{(k)}\| \leq  \|\vvt_k- \vv^{(k)}\| \leq  \|\vvt_k- \vv_Q^{(k)}\|+\|\mvec{\zeta}^{(k)}\|$. Therefore,
	\begin{align}
	1-\langle \ev'_{Q}, \ev'\rangle
	&\leq \|\mvec{\zeta}^{(k)}\|\Big( {1\over  \|\vvt_k- \vv^{(k)}\|}+{1\over  \|\vvt_k- \vv_Q^{(k)}\|}\Big),
	\end{align}
	and similarly, $1-\langle \ev_{Q}, \ev\rangle\leq \|\mvec{\zeta}^{(k-1)}\|( {1\over  \|\vvt_k- \vv^{(k-1)}\|}+{1\over  \|\vvt_k- \vv_Q^{(k-1)}\|})$.
	As argued earlier, $\mvec{\zeta}^{(k)}\leq  L_k 2^{-b}\sqrt{{\eta_k}}$ and $\mvec{\zeta}^{(k-1)}\leq  L_k 2^{-b}\sqrt{{\eta_k}}$. Moreover, by assumption, $ \|\vvt_k- \vv^{(k)}\|\geq \sqrt{nB}\delta_k$. Using the triangle inequality, $ \|\vvt_t- \vv_Q^{(k)}\|\geq \sqrt{nB}\delta_k-\| \vv^{(k)}- \vv_Q^{(k)}\|\geq  \sqrt{nB}\delta_k(1-\gamma_k)$, where $\gamma_k$ is defined in \eqref{Eq:gammak}.
	Therefore, $1-\langle \ev'_{Q}, \ev'\rangle\leq\gamma_k(1+{1\over 1-\gamma_k})  $, and 
	\begin{align}
	\|\ev'_{Q}-\ev'\|\leq (\gamma_k(1+{1\over 1-\gamma_k}) )^{1\over 2}.\label{eq:ep-min-epQ-bd-1}
	\end{align}
	Similarly, by assumption, $ \|\vvt_k- \vv^{(k-1)}\|\geq \sqrt{nB}\delta_k$. Therefore, following similar steps, we have
	\begin{align}
	\|\ev_{Q}-\ev\|\leq (\gamma_k(1+{1\over 1-\gamma_k}) )^{1\over 2}.\label{eq:e-min-eQ-bd-1}
	\end{align}
	
	On the other hand, 
	\begin{eqnarray}
	&&\langle \ev,\ev'\rangle-{B}\langle  \Rmat\inv \Hmat \ev, \Hmat \ev'\rangle \nonumber\\
	&=&\langle (\ev-\ev_Q+\ev_Q),\ev'\rangle-{B}\langle  \Rmat\inv \Hmat  (\ev-\ev_Q+\ev_Q), \Hmat \ev'\rangle\\
	&=&\langle \ev-\ev_Q,\ev'\rangle-{B}\langle  \Rmat\inv \Hmat  (\ev-\ev_Q), \Hmat \ev'\rangle
	+\langle \ev_Q,\ev'\rangle-{B}\langle  \Rmat\inv \Hmat  \ev_Q, \Hmat \ev'\rangle \nonumber\\
	&=&\langle \ev-\ev_Q,\ev'\rangle-{B}\langle  \Rmat\inv \Hmat  (\ev-\ev_Q), \Hmat \ev'\rangle
	+\langle \ev_Q,\ev'-\ev'_Q\rangle \nonumber\\
	&&-{B}\langle  \Rmat\inv \Hmat  \ev_Q, \Hmat (\ev'-\ev'_Q)\rangle
	+\langle \ev_Q,\ev'_Q\rangle-{B}\langle  \Rmat\inv \Hmat  \ev_Q, \Hmat \ev'_Q\rangle
	.
	\end{eqnarray}
	Therefore, applying the Cauchy-Schwartz inequality, we have 
	\begin{align}
	&\left|(\langle \ev,\ev'\rangle-{B}\langle  \Rmat\inv \Hmat \ev, \Hmat \ev'\rangle)
	-(\langle \ev_Q,\ev'_Q\rangle-{B}\langle  \Rmat\inv \Hmat \ev_Q, \Hmat \ev'_Q\rangle)\right|\nonumber\\
	&\leq (1+2{B}\sigma_{\max}(\Hmat\ts\Rmat\inv\Hmat))( \|\ev-\ev_Q\|+\|\ev'-\ev'_Q\|).\label{eq:diff-between-e-eQ-1}
	\end{align}
	
	Next we bound $\langle \ev_Q,\ev'_Q\rangle-{B}\langle  \Rmat\inv \Hmat \ev_Q, \Hmat \ev'_Q\rangle$. Recall that $\Rmat={\rm diag}(R_1,\ldots,R_B)$ and $\Hmat=[\Dmat_1,\ldots,\Dmat_B]$, where $\Dmat_j$ is an $n\times n$ diagonal matrix. Moreover, dividing $\ev_Q,\ev'_Q\in\mathds{R}^{nB}$ into $B$ blocks of length $n$, we can write $\ev_Q=[(\ev_{Q,1}\ts,\ldots,(\ev_{Q,B})\ts]\ts$ and $\ev'_Q=[(\ev'_{Q,1})\ts,\ldots,(\ev'_{Q,B})\ts]\ts$. Therefore, using this notation, we have
	\begin{align}
	&\langle \ev_Q,\ev'_Q\rangle-{B}\langle  \Rmat\inv \Hmat \ev_Q, \Hmat \ev'_Q\rangle \nonumber\\
	&=\sum_{i=1}^n\Big(\sum_{b=1}^B e_{Q,bi}e_{Q,bi}^{(1)} 
	-\frac{{B}}{R_i}{\sum_{b_1=1}^B  D_{b_1i}e_{Q,b_1i}}\sum_{b_2=1}^B  D_{b_2i}e'_{Q,b_2i}\Big)\nonumber\\
	&=\sum_{i=1}^nX_i,
	\end{align}
	where for $i=1,\ldots,n$,  random variable $X_i$ is defined as 
	\begin{align}
	X_i\triangleq \sum_{b=1}^B e_{Q,bi}e'_{Q,bi}
	-\frac{{B}}{R_i}{\sum_{b_1=1}^B  D_{b_1i}e_{Q,b_1i}}\sum_{b_2=1}^B  D_{b_2i}e'_{Q,b_2i}. 
	\end{align}
	
	We next prove that $\{X_i\}$ are a sequence of  independent zero-mean and bounded random variables. First, since $D_{bi}$, $(b,i)\in\{1,\ldots,B\}\times\{1,\ldots,n\}$, are i.i.d.$\sim\Nc(0,1)$, it is straightforward to observe that $X_1,\ldots,X_n$ are independent random variables, as each one only depends on a disjoint subset of them. Furthermore, since $D_{b_1i}$ and $D_{b_2i}$ are independent and have symmetric distributions, for $b_1\neq b_2$,  
	\begin{align}
	\Eox{D_{b_1i}D_{b_2i}\over R_i}=0.\label{eq:Eox-D-b1-b2-1}
	\end{align}
	But, 
	\begin{align}
	\Eox{X_i}= \sum_{b=1}^B e_{Q,bi}e'_{Q,bi}
	-{B}{\sum_{b_1=1}^B\sum_{b_2=1}^B  \Eox{D_{b_1i}D_{b_2i}\over R_i}e_{Q,b_1i}}  e'_{Q,b_2i}.
	\end{align}
	Therefore, using \eqref{eq:Eox-D-b1-b2-1}, it follows that 
	\begin{align}
	\Eox{X_i}=\sum_{b=1}^B e_{Q,bi}e'_{Q,bi}-{B}{\sum_{b=1}^B  \Eox{D_{bi}^2\over R_i}e_{Q,b_1i}}  e'_{Q,b_2i}.
	\end{align} 
	On the other hand $ \sum_{b=1}^B\Eox{D_{bi}^2\over R_i}=1$, and given the symmetry of the problem, $\Eox{D_{1i}^2\over R_i}=\ldots=\Eox{D_{Bi}^2\over R_i}$. Therefore, $\Eox{D_{bi}^2\over R_i}={1\over B}$, for all $b$, which shows that $\Eox{X_i}=0$, as claimed. Finally, we show the boundedness of $X_i$. By the Cauchy-Schwartz inequality, 
	\begin{align}
	|X_i-\sum_{b=1}^B e_{Q,bi}e'_{Q,bi}|
	&=\frac{{B}}{R_i}|{\sum_{b_1=1}^B  D_{b_1i}e_{Q,b_1i}}\sum_{b_2=1}^B  D_{b_2i}e'_{Q,b_2i}|\nonumber\\
	&\leq \frac{{B}}{R_i} (\sum_{b=1}^B  D_{bi}^2)(\sum_{b=1}^B (e_{Q,bi})^2)^{1\over 2}(\sum_{b=1}^B (e'_{Q,bi})^2)^{1\over 2}\nonumber\\
	&= {B} (\sum_{b=1}^B (e_{Q,bi})^2)^{1\over 2}(\sum_{b=1}^B (e'_{Q,bi})^2)^{1\over 2}.\label{eq:boundedness-Xi-1}
	\end{align}
	Therefore, given $\ev_{Q}$ and $\ev'_{Q}$, using the Hoeffding's inequality, it follows that
	\begin{align}
	\Pr\Big(\sum_{i=1}^n(\Eox{X_i}-X_i)\geq \lambda \Big)
	\leq 
	\exp\Big(-{2\lambda^2\over  4{B}^2  \sum_{i=1}^n \sum_{b=1}^B (e_{Q,bi})^2\sum_{b=1}^B (e'_{Q,bi})^2}\Big).\label{eq:Hoeffding-main-1}
	\end{align}
	
	Define the set of normalized quantized error vectors as 
	\begin{align}
	\Fc_{Q,k}\triangleq  
	\Big\{
	{\vvt_k-g_k( [\uv]_b)\over \|\vvt_t-g_k( [\uv]_b)\|}: \; \uv\in\Uc^{\eta_k}, \; {1\over \sqrt{nB}}  \|\vvt_k-g_k( \uv])\|\geq \delta_k
	\Big\}
	\end{align}
	Then, given $\lambda>0$, define event ${\Ec}$ as
	\begin{align}
	{\Ec}\triangleq \Big\{&\LPd{\ev_Q,\ev'_Q}- {B}\LPd{ \Hmat \ev_Q, \Rmat\inv \Hmat \ev'_Q} \leq \lambda, \;\forall \;(\ev_Q, \ev'_Q)\in\Fc_{\Qc,k}^2 \Big\}.\label{eq:proof-gap_event_1-1}
	\end{align}
	
	Consider 	
	\[
	\ev_Q={\vvt_k-g_k( [\uv]_b)\over \|\vvt_t-g_k( [\uv]_b)\|},
	\]
	where $ \|\vvt_k-g_k( \uv)\|\geq \delta_k\sqrt{nB}$. By the triangle inequality,  $ \|\vvt_k-g_k( [\uv]_b)\|\geq  \|\vvt_k-g_k(\uv)\| -  \|g_k( [\uv]_b)-g_k(\uv)\|$. But, $\|g_k( [\uv]_b)-g_k(\uv)\|\leq L_k 2^{-b}\sqrt{\eta_k}$. Therefore, combining the two bounds, it follows that
	\[
	\|\vvt_k-g_k( [\uv]_b)\|\geq \delta_k\sqrt{nB}-L_k 2^{-b}\sqrt{\eta_k}.
	\]
	Therefore, 
	\[
	\|\ev_Q\|_{\infty}\leq  {\rho\over  \delta_k\sqrt{nB}-L_k 2^{-b}\sqrt{\eta_k}}={\rho\over  \delta_k\sqrt{nB}(1-\gamma_k)},
	\]
	where $\gamma_k$ is defined in \eqref{Eq:gammak}.
	As a result 
	\begin{align}
	\sum_{i=1}^n \sum_{b=1}^B (e_{Q,bi})^2\sum_{b=1}^B (e'_{Q,bi})^2\leq {\rho^4\over n\delta_k^4(1-\gamma_k)^4}.
	\end{align}
	
	Therefore, \eqref{eq:boundedness-Xi-1} can be further upper bounded as follows:
	\begin{align}
	\Pr\Big(\sum_{i=1}^n(\Eox{X_i}-X_i)\geq \lambda \Big)
	\leq 
	\exp\Big(-{2\lambda^2 n\delta_k^4(1-\gamma_k)^4\over  4 {B}^2\rho^4 }\Big).
	\end{align}
	Hence, by the union bound,
	\begin{align}
	\Pr(\Ec^c)\leq |\Uc_Q|^{2\eta_k}\exp\Big(-{2\lambda^2 n\delta_k^4(1-\gamma_k)^4\over  4 {B}^2\rho^4 }\Big),
	\end{align}
	where $\Uc_{Q}\triangleq \{[\uv]_b:\;\uv\in\Uc\}$.
	
	Finally, we bound $\sigma_{\max}(\Hmat\ts\Rmat\inv \Hmat)$. Consider $\xv\in\mathds{R}^{nB}$ and let $\xv=[\xv_1\ts,\ldots,\xv_B\ts]\ts$, where $\xv_b\in\mathds{R}^n$.  First, note that since $\Hmat=[\Dmat_1,\ldots,\Dmat_B]$,  $\Hmat\xv=\sum_{b=1}^B\Dmat_b\xv_b$ and $\Rmat\inv\Hmat\xv=\sum_{b=1}^B\Rmat\inv\Dmat_b\xv_b$. Let $\qv=\Rmat\inv\Hmat\xv$. Then, $\zv\in\mathds{R}^n$ and 
	\[
	q_i={\sum_{b=1}^B D_{bi}x_{bi} \over \sum_{b=1}^B D^2_{bi} }.
	\]
	Let $\qv'=\Hmat\ts\Rmat\inv \Hmat\xv=\Hmat\ts\qv$. Assume that $\qv'=[(\qv'_1)\ts,\ldots,(\qv'_B)\ts]\ts$, where $\qv'_b\in\mathds{R}^n$. Then,
	\[
	q'_{bi}={D_{bi}\sum_{b'=1}^B D_{bi}x_{b'i} \over \sum_{b'=1}^B D^2_{b'i} }.
	\]
	Therefore, 
	\begin{align}
	\|\Hmat\ts\Rmat\inv \Hmat\xv\|^2=\|\qv'\|^2
	&=\sum_{i=1}^n\sum_{b=1}^B{D^2_{bi}(\sum_{b'=1}^B D_{b'i}x_{b'i})^2 \over (\sum_{b'=1}^B D^2_{b'i})^2 } \nonumber\\
	&=\sum_{i=1}^n{(\sum_{b'=1}^B D_{b'i}x_{b'i})^2 \over \sum_{b'=1}^B D^2_{b'i} }\nonumber\\
	&\stackrel{\rm (a)}{\leq} \sum_{i=1}^n{\sum_{b'=1}^B D^2_{b'i}\sum_{b'=1}^B x^2_{b'i} \over \sum_{b'=1}^B D^2_{b'i} }\nonumber\\
	&= \sum_{i=1}^n\sum_{b'=1}x^2_{b'i}=\|\xv\|^2,
	\end{align}
	where $\rm (a)$ follows from Cauchy-Schwarz inequality. Therefore, $\sigma_{\max}(\Hmat\ts\Rmat\inv \Hmat)\leq 1$.

	Combining \eqref{eq:error-reduction-main-1}, \eqref{eq:diff-between-e-eQ-1} and $\sigma_{\max}(\Hmat\ts\Rmat\inv \Hmat)\leq 1$, conditioned on $\Ec$, it follows that 
	\begin{align*}
	{1\over \sqrt{nB}}\|\vvt_k- \vv^{(k)}\|\leq &2\left| \lambda + (1+2{B})( \|\ev-\ev_Q\|+\|\ev'-\ev'_Q\|) \right| \|\vvt_k- \vv^{(k-1)}\| \nonumber\\
	&+2{B}\|\xv^* -\vvt_k \|\nonumber\\
	\leq &{2\over \sqrt{nB}}( \lambda +\alpha_k) \|\vvt_k- \vv^{(k-1)}\|+2{B}\delta_k,\label{eq:final-reduction-ineq-1}
	\end{align*}
	where the last line follows from \eqref{eq:ep-min-epQ-bd-1} and \eqref{eq:e-min-eQ-bd-1} and $\alpha_k$ is defined in \eqref{Eq:alphak}. Finally, note that by the triangle inequality,
	\begin{align}
	\|\vvt_k- \vv^{(k-1)}\|&\leq  \|\vvt_{k-1}- \vv^{(k-1)}\|+ \|\vvt_k- \vvt_{k-1}\|\nonumber\\
	&\leq  \|\vvt_{k-1}- \vv^{(k-1)}\|+ \|\vvt_k- \xv^*\|+ \|\vvt_{k-1}- \vv\|\nonumber\\
	&\leq  \|\vvt_{k-1}- \vv^{(k-1)}\|+\sqrt{nB}(\delta_k+\delta_{k-1}),
	\end{align}
	where the last line follows from our assumption that $g_{k-1}$ and $g_{k}$ cover $\Qc$ with distortion $\delta_{k-1}$ and $\delta_k$, respectively. Combining this with \eqref{eq:final-reduction-ineq-1} yields the desired result. 
\end{proof}
As mentioned earlier, the main result follows directly from Theorem \ref{The:GAP_SCI_pro} as follows. 
\begin{proof}[Proof of Theorem~\ref{The:GAP_SCI_pro_2}] 
	In Theorem \ref{The:GAP_SCI_pro},  set quantization at iteration $k$ as $b_k=\lceil (1-\zeta) \log {1\over \delta_k }\rceil$. Then, $\gamma_k$  and $\alpha_k$ defined in \eqref{Eq:gammak}  and \eqref{Eq:alphak}, respectively, can be bounded as
	\begin{align}
	\gamma_k\leq  L_k  \delta_k^{\zeta}\sqrt{\eta_k \over nB},
	\end{align}
	Moreover, $|\Uc_Q|\leq 2^{2b}\leq 2^{ (1-\zeta) \log {1\over \delta_k }+1}$, and therefore, 
	\begin{align}
	&\sum_{i=0}^{r-1} |\Uc^{(k)}_Q|^{2\eta_k}\exp(-{2\lambda^2 n\delta_k^4(1-\gamma_k)^4\over  4 {B}^2\rho^4 }) \nonumber\\
	\leq &\exp(-{2\lambda^2 n\delta_k^4(1-\gamma_k)^4\over  4 {B}^2\rho^4 }+ 2\ln 2((1-\zeta) \log {1\over \delta_k }+1)\eta_k).
	\end{align}
\end{proof}

\section{Conclusions}
We proposed GAP-net, a deep unfolding technique for snapshot compressive imaging. GAP-net is a unified framework that leads to state-of-the-art performance on both video and spectral SCI with theoretical convergence guarantees.
It can reconstruct more than 60 data cubes per second for both systems with a spatial size of $256\times 256$.
The demonstrated excellent real data results and fast reconstruction speed suggest that GAP-net can be embedded into real SCI cameras to provide end-to-end real-time capture and reconstruction, and is thus ready to be widely used in  practical applications. 
Currently we are developing energy-efficient networks for SCI cameras to be deployed on robots and self-driving vehicles enabling them to capture rich information.
\bibliographystyle{spmpsci}      


\end{document}